\documentclass[conference]{IEEEtran}
\IEEEoverridecommandlockouts
\usepackage{cite}
\usepackage{amsmath,amssymb,amsfonts}
\usepackage{algorithmic}
\usepackage{algorithm}
\usepackage{graphicx}
\usepackage{textcomp}
\usepackage{xcolor}
\usepackage{booktabs}
\usepackage{multirow}
\usepackage{array}
\newtheorem{definition}{Definition}
\newtheorem{assumption}{Assumption}
\newtheorem{proposition}{Proposition}
\newtheorem{corollary}{Corollary}
\newtheorem{lemma}{Lemma}
\def\BibTeX{{\rm B\kern-.05em{\sc i\kern-.025em b}\kern-.08em
    T\kern-.1667em\lower.7ex\hbox{E}\kern-.125emX}}
\usepackage[utf8]{inputenc}
\usepackage{newunicodechar}
\usepackage{url}
\newunicodechar{∼}{\textasciitilde}
\begin{document}

\title{Covering the Unseen: Information Demand Coverage Optimization for Retrieval-Augmented Generation}

\author{
\IEEEauthorblockN{
Bingxue Zhang\textsuperscript{1},
Jianying Jia\textsuperscript{1},
Feida Zhu\textsuperscript{2,*}
}
\IEEEauthorblockA{
\textsuperscript{1}\textit{University of Shanghai for Science and Technology}, Shanghai, China\\
\textsuperscript{2}\textit{Singapore Management University}, Singapore\\
Emails: zhangbingxue@usst.edu.cn, jiajianying@st.usst.edu.cn, fdzhu@smu.edu.sg\\
\textsuperscript{*}Corresponding author
}
}
\maketitle

\begin{abstract}
RAG typically models context selection as a chunk-wise ranking problem, assuming that a query's information need can be captured by a single embedding vector. However, this assumption often fails for complex queries such as multi-hop reasoning and ambiguous QA, causing top-$k$ selections to concentrate on a single semantic dimension while leaving critical sub-questions unaddressed. To address this limitation, we propose GeoRAG, which reformulates context selection as \emph{Information Demand Coverage Optimization}. GeoRAG constructs a multi-dimensional Information Demand Proxy distribution via two-stage diverse sub-query generation and reverse-validation quality weighting, and optimizes selection by minimizing the entropic Sinkhorn--Wasserstein distance between the demand distribution and the coverage measure of the selected set. We define a demand-weighted facility-location objective that is monotone submodular and admits the classical greedy $1-\frac{1}{e}$ approximation; GeoRAG implements a Sinkhorn-based marginal-gain surrogate that approximates this coverage objective and is validated empirically through ablations, enabling balanced coverage across multi-dimensional demands in an unsupervised, training-free, and retrieval-agnostic manner. We further show that query-proximity-monotone scorers based on a single-point need representation cannot adequately cover multi-modal demands, revealing a structural limitation of conventional ranking-based selection. Experiments on six open-domain QA benchmarks demonstrate consistent EM improvements of $+6.5$ to $+7.5$ over top-$k$ truncation and superiority over strong baselines including MMR, DPP, BGE-Reranker, SMART-RAG, and AdaGReS. The greatest gains reach $+9.7$ on both HotpotQA and ASQA. GeoRAG also substantially improves demand-dimension coverage, and the performance gains remain stable across different context budgets and sub-query generators, confirming the effectiveness of the coverage-based formulation.
\end{abstract}

\begin{IEEEkeywords}
RAG, Information Demand Coverage Optimization, Sinkhorn--Wasserstein
\end{IEEEkeywords}

\section{Introduction}
\label{sec:introduction}

\subsection{Retrieval-Augmented Generation and the Context Selection Problem}

Retrieval-Augmented Generation has emerged as a core paradigm for enhancing the factual accuracy of Large Language Models by retrieving relevant paragraphs from external knowledge bases at inference time to inject into the generator's context \cite{lewis2020rag}. Its standard pipeline consists of three steps: a dense retriever recalls $K$ candidate chunks from a corpus of millions of documents (typically $K=200$). Then $k$ chunks are selected from these candidates to form the context window (typically $k=5$). Finally, this window is fed into the generator to produce an answer. Although the retrieval stage, that is, how to efficiently and accurately recall candidate chunks, has been extensively investigated, both at the model level \cite{karpukhin2020dpr, xiong2021ance} and at the systems level through purpose-built vector data management and approximate nearest-neighbor indexing \cite{wang2021milvus, pan2024vectorsurvey}, with recent database research continuing to push range-filtered search \cite{zuo2024serf}, adaptive index construction \cite{mageirakos2025crackivf}, and automated tuning of vector data management systems \cite{yang2024vdtuner}, the second step -- selecting which $k$ chunks to keep from the $K$ candidates, called \emph{context selection} -- has long been treated as a simple ranking truncation, with its underlying design assumptions left unquestioned.

\subsection{Structural Limitations of Existing Methods}

Existing context selection methods, whether chunk-wise ranking or set diversity approaches, share an unexamined assumption: the information need of a query $Q$ can be fully expressed by a single embedding vector $\text{emb}(Q)$. Chunk-wise ranking methods, such as Cosine top-$k$, BGE-Reranker \cite{xiao2023bge}, and RankRAG \cite{yu2024rankrag}, independently score each candidate chunk as $f(c_i, Q)$. This essentially measures the distance between the candidate and the single point $\text{emb}(Q)$, making the selection process entirely blind to the dimensions of the information that have already been covered by the selected set. Set diversity methods, such as MMR \cite{carbonell1998mmr}, DPP \cite{kulesza2012dpp}, SMART-RAG \cite{li2024smartrag} and AdaGReS \cite{peng2025adagres}, introduce set awareness; however, their direction of diversity is determined isotropically by the geometric structure of the embedding space, equivalent to assuming that the query needs radiate uniformly from $\text{emb}(Q)$, regardless of the true semantic dimensional structure of the query. The common blind spot in both types of method is that none can answer the question of which information dimension of the query is under-covered by the current selected set $S$, and which gap the next chunk should prioritize filling. A direct consequence of this structural flaw is that the selected chunks are highly redundant: measured on the Dense retriever over the NQ and HotpotQA development sets, the average pairwise cosine similarity among the top-5 selected candidates is $0.82$ (mean over queries). Such massive semantic redundancy occupies the limited context window, leaving critical sub-questions perpetually unanswered.

\subsection{Typical Scenarios of Single-Point Need Assumption Failure}

Consider the multi-hop question: ``Who was the vice president of the president who signed the Voting Rights Act?'' The information need for this query exhibits a bimodal structure. Peak A, pointing to the identity of who signed the act and thus to Johnson, and Peak B, pointing to the vice president of that president and thus to Humphrey, are semantically independent. However, $\text{emb}(Q)$ lies closer to Peak A in the embedding space. Consequently, Cosine top-$k$ allocates all five slots to paragraphs related to Peak A. Even if critical chunks covering Peak B exist in the corpus, they are systematically ignored due to their greater distance from $\text{emb}(Q)$.

Crucially, this failure is one of \emph{selection}, not of \emph{recall}. Across the six retrievers we study, Recall@200 already ranges from $88.7\%$ to $93.8\%$: the evidence covering Peak B is almost always present in the candidate pool, yet it is discarded at the selection step. Moreover, across the six retrievers we study, the magnitude of GeoRAG's gain shows no positive association with retrieval quality (Spearman $\rho = -0.23$ over $n=6$, not significant; we treat this as suggestive rather than conclusive given the small sample, Section~\ref{sec:experiments}), consistent with the bottleneck lying in how the $k$ chunks are chosen from the pool rather than in how the pool is recalled. The root cause is therefore a structural flaw in the context selection objective: a single-point need representation cannot capture the multidimensional information structure of the query, as illustrated in Figure~\ref{Motivation} and made precise in Proposition~\ref{prop:noncover}.

\begin{figure}[htbp]
\centerline{\includegraphics[width=\columnwidth]{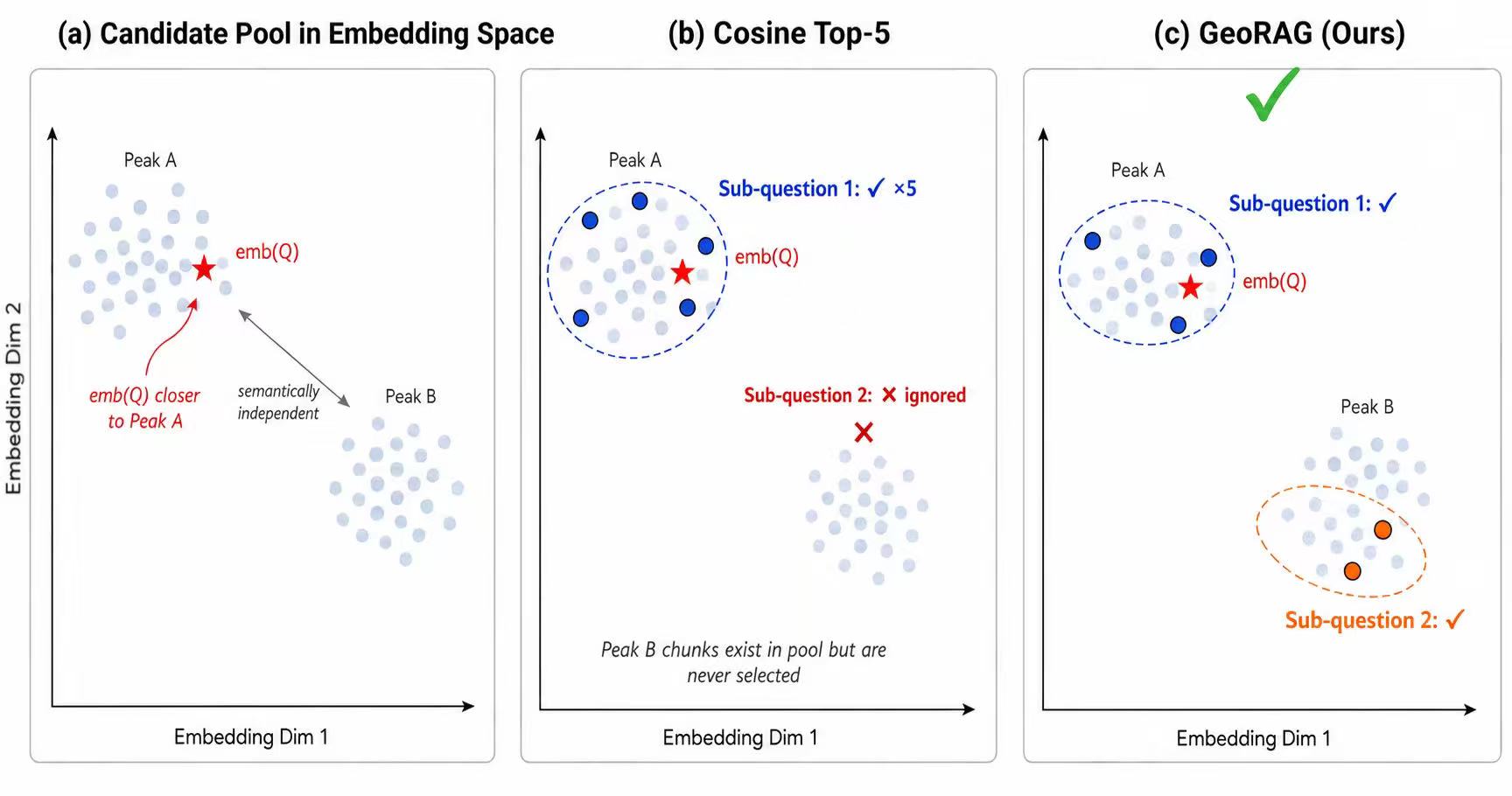}}
\caption{Motivation: single-point $\text{emb}(Q)$ (left) leads Cosine top-5 to crowd all
slots around Peak~A (center), while GeoRAG balances coverage across both demand peaks (right).}
\label{Motivation}
\end{figure}

\subsection{Method Overview and Main Contributions}

We reformulate context selection from a chunk-wise ranking problem into an Information Demand Coverage Optimization problem. Specifically, we construct a multi-dimensional Information Demand Proxy distribution $\mathcal{P}_Q$ and maximize its coverage by the selected set, realized by minimizing the Sinkhorn--Wasserstein distance between the demand-anchored coverage measure $\mathcal{E}_S$ and $\mathcal{P}_Q$.

Our formulation introduces two orthogonal axes. \textbf{Axis A} models \emph{where the demand lies} through $\mathcal{P}_Q$, specifying which semantic dimensions should be covered. \textbf{Axis B} models \emph{what remains uncovered} through set-aware selection, guiding where the next chunk should be allocated. Chunk-wise rankers capture neither axis, while diversity-based methods such as MMR and DPP provide only an undirected form of Axis~B and lack Axis~A entirely. The experiments show that Axis~A and Axis~B independently contribute $+2.7$ and $+3.4$ EM on HotpotQA, demonstrating complementary gains.

Optimizing this objective faces three challenges: (P1) $\mathcal{P}_Q$ is a latent variable without supervision; (P2) LLM-generated sub-queries vary in quality, with roughly 10--20\% exhibiting semantic drift or redundancy; and (P3) exact optimization is computationally intractable due to the combinatorial search space $\binom{200}{5}\approx 2.5\times10^9$ and the $O(n^3)$ complexity of Optimal Transport.

To address these challenges, we propose GeoRAG, an unsupervised, training-free RAG context selection framework, with the following three contributions.

\noindent\textbf{Contribution 1.} We reformulate the selection of the RAG context as the optimization of information demand coverage and \emph{prove} (Proposition~\ref{prop:noncover}) that any query-proximity-monotone selector is structurally incapable of covering a bimodal demand once its two peaks separate in query-proximity (Assumption~\ref{ass:bimodal}), independently of the scorer's functional form or model capacity. This establishes the limitation of single-point need representations as a representational barrier rather than an engineering deficiency; for cross-encoder rerankers, which fall outside this class, we corroborate the same failure empirically via BGE-Reranker (Section~\ref{sec:experiments}).

\noindent\textbf{Contribution 2.} We propose GeoRAG, an unsupervised, training-free framework that resolves the three core challenges through: (P1)~two-stage diverse sub-query generation to approximate the multi-dimensional demand structure $\mathcal{P}_Q$ (Axis~A); (P2)~reverse-validation quality weighting to suppress drifted sub-queries; and (P3)~a greedy facility-location coverage objective that is monotone submodular (with the classical $(1-1/e)$ guarantee for its exact marginal gain), realized in deployment by a Sinkhorn-based marginal-gain surrogate for set-aware, gap-directed selection (Axis~B).

\noindent\textbf{Contribution 3.} Across six QA benchmarks and six retrieval methods, GeoRAG achieves consistent average EM gains of $+6.5$--$+7.5$ over the top-$k$ truncation and outperforms all selection baselines. Ablations confirm that Axis~A and Axis~B are independently irreplaceable, contributing $+2.7$ and $+3.4$ EM respectively on HotpotQA.

\section{Related Work}
\label{sec:related}

\subsection{Chunk-wise Reranking Methods}

Modeling context selection as a chunk-wise ranking problem is the most mainstream paradigm. Early work used cosine similarity top-$k$ truncation as a baseline; recent research has shifted toward stronger reranking models. BGE-Reranker \cite{xiao2023bge} jointly encodes concatenated query-chunk sequences using a cross-encoder, achieving a strong baseline in multilingual scenarios. RankRAG \cite{yu2024rankrag} unifies context ranking and answer generation, surpassing GPT-4 on nine benchmarks through instruction tuning, representing the current state-of-the-art in segment reranking. InfoGain-RAG \cite{wang2025infogain} estimates the utility of each chunk through information gain-based reranking and filtering, but as with all chunk-wise methods it does not model multi-evidence interactions within the selected set. The common bottleneck of these methods lies in optimizing $f(c_i, Q)$ within the chunk-wise paradigm. The selection process remains entirely blind to the coverage status of the selected set. We make this rigorous for the query-proximity-monotone subclass in Proposition~\ref{prop:noncover}; for cross-encoder rerankers, whose scores are not a closed-form function of $\cos(c_i,Q)$ but remain calibrated to query relevance, we argue that the same paradigm-level limitation holds in practice and confirm it empirically in Section~\ref{sec:experiments}, rather than treating it as a mere engineering artifact.

\subsection{Set Diversity-based Context Selection Methods}

To overcome the limitations of the chunk-wise paradigm, another line of research introduces set awareness. Classic approaches such as MMR \cite{carbonell1998mmr} and DPP \cite{kulesza2012dpp} achieve diverse selection through the maximal marginal relevance and determinant point processes, respectively. This set-level diversification objective has also been studied extensively in the database community, where result diversification is formalized as a combinatorial trade-off between relevance and coverage over query results \cite{drosou2010diversification, vieira2011diversification, qin2012diversify}; these works likewise treat ``diversity'' as an isotropic spread over the result space rather than coverage of an explicitly modeled demand distribution. SMART-RAG \cite{li2024smartrag} combines DPP with Natural Language Inference conflict modeling to further enhance the complementarity of information within sets. ScalDPP \cite{scaldpp2026} extends DPP to large-scale RAG scenarios by introducing P-Adapters and Diverse Margin Loss; however, it requires supervised training and thus falls outside the training-free setting we target. AdaGReS \cite{peng2025adagres} performs greedy set selection under token budget constraints and is closest to our work in terms of objective formulation. However, the direction of diversity in all the aforementioned methods is determined isotropically by the geometric structure of the embedding space, implicitly assuming that the query's information need radiates uniformly from a single point $\text{emb}(Q)$. In the terms of Section~\ref{sec:formulation}, these methods possess only an undirected form of Axis~B and lack Axis~A entirely: they know to spread selections apart, but not \emph{which} demand directions deserve coverage. Although AdaGReS also performs set-level selection, its demand remains a single query vector, making it fundamentally incapable of modeling the multidimensional sub-question structure of a query.

\subsection{Query Transformation and Expansion Methods}

Another direction aims to improve the recall quality by transforming or expanding queries. HyDE \cite{gao2022hyde}, a foundational work in query transformation, retrieves by embedding a hypothetical document generated by an LLM; as its sub-query is a single hypothetical document rather than a structured decomposition, its coverage of the query's information dimensions is only partial. Multi-Query Retrieval \cite{langchain2023multiquery} and RAG-Fusion \cite{raudaschl2023ragfusion} generate multiple sub-queries to broaden candidate recall; however, their final selection stage degrades into flat cosine ranking or Reciprocal Rank Fusion, whereby the multidimensional structural information embedded in the sub-queries is completely discarded during selection. OptiSet \cite{optiset2026} proposes an Expand-then-Refine framework that leverages changes in generator conditional utility to identify complementary and redundant evidence, but it relies on supervised training with LLMs, limiting its generalization to the training distribution. The fundamental distinction between GeoRAG and these works is that sub-queries in our method serve not as retrieval tools for expanding recall, but rather as modeling signals for constructing the multi-dimensional Information Demand Proxy distribution $\mathcal{P}_Q$. After reverse-validation quality weighting, they are encoded in $\mathcal{P}_Q$ to directly drive set-aware coverage selection, operating entirely without supervision or training.

\subsection{Optimal Transport Theory and Its Applications in NLP}

Optimal Transport has mature applications in NLP. Word Mover's Distance \cite{kusner2015wmd} applies it to document similarity measurement; WassRank \cite{yu2019wassrank} applies OT to listwise document ranking. The Sinkhorn algorithm \cite{cuturi2013sinkhorn} reduces the $O(n^3)$ complexity of exact Optimal Transport to a practical range via entropy regularization and has been widely applied in machine translation alignment and cross-modal matching. In this work, Optimal Transport is not the objective but the \emph{implementation}: the demand-weighted coverage objective of Section~\ref{sec:formulation} is realized through the entropic Sinkhorn--Wasserstein marginal coverage reduction $\Delta W_\varepsilon$, using the query's multi-dimensional Information Demand Proxy distribution $\mathcal{P}_Q$ as the transport target. As discussed in Section~\ref{sec:formulation}, as $\varepsilon\to 0$ the Sinkhorn objective becomes increasingly aligned with the hard coverage objective, so OT serves as a differentiable, numerically stable \emph{smoothed coverage proxy} for demand-driven set optimization---not an exact reformulation of it, and not a goal in itself.

Table~\ref{tab:method_comparison} summarizes the differences in capabilities between all methods.

\begin{table}[htbp]
\caption{Method Capability Comparison}
\label{tab:method_comparison}
\centering
\renewcommand{\arraystretch}{1.2}
\resizebox{\columnwidth}{!}{
\begin{tabular}{lccccc}
\toprule
\textbf{Method}
  & \textbf{Set-Aware}
  & \textbf{Multi-dim.}
  & \textbf{Sub-query}
  & \textbf{Prior Fusion}
  & \textbf{Train-Free} \\
\midrule
Cosine top-$k$     & $\times$     & $\times$     & $\times$     & $\times$     & $\checkmark$ \\
BGE-Reranker       & $\times$     & $\times$     & $\times$     & $\times$     & $\times$     \\
RankRAG$^\dagger$  & $\times$     & $\times$     & $\times$     & $\times$     & $\times$     \\
MMR                & $\checkmark$ & $\times$     & $\times$     & $\times$     & $\checkmark$ \\
DPP                & $\checkmark$ & $\times$     & $\times$     & $\times$     & $\checkmark$ \\
SMART-RAG          & $\checkmark$ & $\times$     & $\times$     & $\times$     & $\checkmark$ \\
ScalDPP$^\dagger$  & $\checkmark$ & $\times$     & $\times$     & $\times$     & $\times$     \\
AdaGReS            & $\checkmark$ & $\times$     & $\times$     & $\times$     & $\checkmark$ \\
HyDE               & $\times$     & $\times$     & $\triangle$  & $\times$     & $\checkmark$ \\
Multi-Query        & $\times$     & $\times$     & $\checkmark$ & $\times$     & $\checkmark$ \\
OptiSet$^\dagger$  & $\checkmark$ & $\triangle$  & $\checkmark$ & $\times$     & $\times$     \\
\midrule
\textbf{GeoRAG}    & $\checkmark$ & $\checkmark$ & $\checkmark$ & $\checkmark$ & $\checkmark$ \\
\bottomrule
\multicolumn{6}{l}{\small $^\dagger$ Requires training or fine-tuning. \quad $\triangle$ Partially satisfied.}
\end{tabular}
}
\end{table}

\section{Problem Formulation}
\label{sec:formulation}

\subsection{Implicit Optimization Objectives of Existing Methods and Their Limitations}

The implicit optimization objective of existing context selection methods can be uniformly formulated as follows:
\begin{equation}
    S^* = \underset{S \subseteq C,\, |S|=k}{\arg\max}\ \sum_{c \in S} f(c, Q)
\end{equation}
where $f(c_i, Q)$ is the chunk-wise relevance score between the candidate chunk $c_i$ and the query $Q$, such as cosine similarity or the cross-encoder score. This formulation is equivalent to assuming that the information need of a query can be fully expressed by a single embedding vector $\text{emb}(Q)$, that is, the need is a single point in the embedding space. Consequently, for a query requiring $T$ semantically independent sub-questions, $f(\cdot, Q)$ favors the dimension closest to $\text{emb}(Q)$, filling all $k$ slots with candidates for the same sub-question and leaving the remaining $T-1$ dimensions blank---directly manifested as the $0.82$ top-5 redundancy reported in Section~\ref{sec:introduction}.

\subsection{The Single-Point Coverage Barrier}
\label{sec:barrier}

We now show that the failure above is structural: it cannot be repaired with a better scoring model. We formalize the notion of a query-proximity-monotone selector and the bimodal demand of Figure~\ref{Motivation}, then prove non-coverability.

\begin{definition}[Query-proximity-monotone selector]
\label{def:singlepoint}
A selection method is a \emph{query-proximity-monotone selector} if it scores each candidate through a function of the candidate embedding and the single query embedding, $f(c_i,Q)=g\!\left(\text{emb}(c_i),\,\text{emb}(Q)\right)$, that is monotone nondecreasing in the cosine similarity $\cos(c_i,Q)$, and returns the top-$k$ candidates by this score. The cosine top-$k$ is the canonical instance. We emphasize that cross-encoder rerankers (e.g.\ BGE-Reranker, RankRAG) are \emph{not} formally members of this class, since their score is a joint encoding of the query--chunk pair rather than a function of $(\text{emb}(c_i),\text{emb}(Q))$ monotone in $\cos(c_i,Q)$; The Proposition~\ref{prop:noncover} does not bind them directly. We return to their practical behavior in Corollary~\ref{cor:capacity} and Section~\ref{sec:experiments}.
\end{definition}

\begin{assumption}[Bimodal demand with single-point asymmetry]
\label{ass:bimodal}
The candidate pool $C$ partitions into a peak-A subset $\mathcal{A}$ and a peak-B subset $\mathcal{B}$ (semantically independent dimensions of the query), each non-empty, with $|\mathcal{A}|\ge k$. Both peaks must be covered to answer $Q$. The query embedding is closer to peak~A, and the two peaks are separated in query-proximity:
\begin{equation}
    \min_{a\in\mathcal{A}}\cos(a,Q)\ >\ \max_{b\in\mathcal{B}}\cos(b,Q).
\end{equation}
\end{assumption}

This separation condition is precisely the geometry depicted in Figure~\ref{Motivation} and is the embedding-space counterpart of the empirically observed $0.82$ top-5 redundancy.

\begin{proposition}[Single-point non-coverability]
\label{prop:noncover}
Under Assumption~\ref{ass:bimodal}, every query-proximity-monotone selector (Definition~\ref{def:singlepoint}) returns a set $S$ with $S\subseteq\mathcal{A}$ and $S\cap\mathcal{B}=\varnothing$. That is, it leaves Peak~B completely uncovered, for any $k\le|\mathcal{A}|$ and for any choice of the score function $g$.
\end{proposition}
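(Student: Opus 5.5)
The argument is a direct order-preservation argument, and I would carry it out in three steps. First, fix an arbitrary query-proximity-monotone selector with score $f(c,Q)=g(\text{emb}(c),\text{emb}(Q))$. By Definition~\ref{def:singlepoint}, this score factors as $f(c,Q)=h(\cos(c,Q))$ for some nondecreasing $h$; the query embedding is fixed, so $f$ depends on $c$ only through $\cos(c,Q)$ in the monotone sense that matters. Second, set $\alpha=\min_{a\in\mathcal{A}}\cos(a,Q)$ and $\beta=\max_{b\in\mathcal{B}}\cos(b,Q)$. Assumption~\ref{ass:bimodal} gives $\alpha>\beta$, so for every pair $a\in\mathcal{A}$, $b\in\mathcal{B}$ we have $\cos(a,Q)>\cos(b,Q)$, and monotonicity gives $f(a,Q)\ge f(b,Q)$. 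In words, every Peak~A candidate scores at least as high as every Peak~B candidate.

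Third, I would turn this pairwise domination into the set statement. Suppose for contradiction that the returned top-$k$ set $S$ contains some $b\in\mathcal{B}$. Because $|S|=k\le|\mathcal{A}|$ and $S$ contains at least one non-$\mathcal{A}$ element, $S\cap\mathcal{A}$ has fewer than $k$ elements. Hence some $a\in\mathcal{A}\setminus S$ exists. A top-$k$ rule must rank every element of $S$ at least as high as every element outside $S$, so $f(b,Q)\ge f(a,Q)$. Combined with the reverse inequality, this forces $f(a,Q)=f(b,Q)$. If $f(a,Q)>f(b,Q)$ strictly, this is already a contradiction, which gives $S\subseteq\mathcal{A}$. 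Since $C=\mathcal{A}\sqcup\mathcal{B}$, it also gives $S\cap\mathcal{B}=\varnothing$. Nothing in this argument used the form of $g$ beyond monotonicity, which yields the ``for any $g$'' clause.

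The main obstacle is ties. With $g$ only nondecreasing (for instance, $g$ constant), a Peak~B candidate can tie with a Peak~A candidate, and an arbitrary tie-breaking rule could admit it into $S$; the proposition is then false as literally stated. I would handle this by making the tie convention explicit. Either require $g$ to be strictly increasing in $\cos(c_i,Q)$, or adopt the natural convention that ties in $f$ are broken by the underlying query proximity $\cos(c,Q)$. Under either reading, $\alpha>\beta$ makes every Peak~A candidate strictly preferred to every Peak~B candidate, and the exchange argument of the third step closes. I would state this convention as a one-line remark before the proof, noting that for nondecreasing $g$ with adversarial tie-breaking the conclusion holds only up to ties.
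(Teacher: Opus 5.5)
Your proof follows the paper's route: monotonicity of $g$ in $\cos(\cdot,Q)$ plus the separation in Assumption~\ref{ass:bimodal} makes every Peak~A candidate outrank every Peak~B candidate, and $|\mathcal{A}|\ge k$ then fills all top-$k$ slots from $\mathcal{A}$, with your exchange argument making this last step explicit. Your tie caveat is correct and exposes a gap in the paper's own proof, which asserts that Peak~A candidates rank \emph{strictly} above Peak~B ones although a merely nondecreasing $g$ gives only $f(a,Q)\ge f(b,Q)$ (a constant $g$ makes all candidates tie), so your strict-monotonicity or cosine-tie-breaking convention is needed for the statement to hold.
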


\begin{IEEEproof}
Because $g$ is monotone non-decreasing in $\cos(\cdot,Q)$, the score order it induces on candidates is consistent with the order by $\cos(\cdot,Q)$. By Assumption~\ref{ass:bimodal}, every $a\in\mathcal{A}$ has $\cos(a,Q)>\cos(b,Q)$ for every $b\in\mathcal{B}$, so every peak-A candidate is ranked strictly above every peak-B candidate. Since $|\mathcal{A}|\ge k$, the top-$k$ positions are filled exclusively by peak-A candidates. Hence $S\subseteq\mathcal{A}$ and $S\cap\mathcal{B}=\varnothing$.
\end{IEEEproof}

\begin{corollary}[Capacity-independence of the barrier]
\label{cor:capacity}
The conclusion of Proposition~\ref{prop:noncover} depends only on the monotonicity of $g$ in $\cos(\cdot,Q)$, not on its functional form. Increasing the capacity or training data of the scoring model changes the shape of $g$ but not this monotonicity, so no query-proximity-monotone selector---however powerful---can recover peak-B coverage. Only a selector that conditions on a multi-point demand representation and on the partial coverage of $S$ can escape the barrier.
\end{corollary}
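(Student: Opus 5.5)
The corollary is essentially a reading of the proof of Proposition~\ref{prop:noncover}. The plan is to isolate exactly which properties of $g$ that proof used, and to show that the conclusion is invariant under every change of $g$ that preserves them.

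\textbf{Step one: re-examine the proof.} Go through the proof of Proposition~\ref{prop:noncover} line by line and record each property of $g$ that it invokes. The only one is this: for all candidates $c,c'$, if $\cos(c,Q)>\cos(c',Q)$, then $c$ is not ranked below $c'$. In other words, the ranking induced by $g$ is a weak refinement of the ranking induced by $\cos(\cdot,Q)$.

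\textbf{Step two: handle ties.} Monotone \emph{nondecreasing} $g$ may map $\cos(a,Q)>\cos(b,Q)$ to equal scores. So we should fix the top-$k$ convention: either the tie-break respects cosine order, or we note that the proposition's statement already assumes a strict ordering. I would flag this as the one delicate point. The cleanest fix is to require strict monotonicity, or to state that ties are broken by cosine. Under either convention the inequality of Assumption~\ref{ass:bimodal} forces every $a\in\mathcal{A}$ ahead of every $b\in\mathcal{B}$. Then $|\mathcal{A}|\ge k$ gives $S\subseteq\mathcal{A}$, exactly as in the proposition.

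\textbf{Step three: formalize ``capacity''.} Model a scorer family as an arbitrary set $\mathcal{G}$ of functions of $(\mathrm{emb}(c),\mathrm{emb}(Q))$. Training, adding parameters, or adding data selects some $g\in\mathcal{G}$. If every member of $\mathcal{G}$ is monotone in $\cos(\cdot,Q)$, then the argument of step two applies uniformly over $\mathcal{G}$, with no dependence on the shape, Lipschitz constant, or parametrization of $g$. This yields the capacity-independence claim as a universally quantified statement over $\mathcal{G}$.

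\textbf{Step four: the ``only a selector that \ldots'' sentence.} This part is not a theorem as stated, so I would prove two precise necessity facts instead.

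First, any selector whose output is a deterministic top-$k$ of a per-candidate score depending on $Q$ only through a single point, and that ranks all of $\mathcal{A}$ above $\mathcal{B}$, fails. This follows from the above.

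Second, exhibit a selector that escapes the barrier, in order to show the barrier is not vacuous. Take a two-point demand $\{q_A,q_B\}$ with $q_B$ closer to $\mathcal{B}$. The greedy facility-location gain $\max_{c\in S\cup\{x\}}\cos(c,q_j)-\max_{c\in S}\cos(c,q_j)$, summed over $j$, becomes zero for further peak-A picks once $q_A$ is saturated. Meanwhile it stays positive for some $b\in\mathcal{B}$. Hence a $b\in\mathcal{B}$ enters $S$ once $k\ge2$.

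The main obstacle is this last step. The negative direction for all non-monotone or set-aware selectors cannot be proved in full generality. So I would explicitly restrict the claim to the dichotomy of the per-chunk single-point class versus the multi-point set-aware witness, and treat the sentence as an interpretive remark supported by that witness.
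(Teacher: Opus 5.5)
Your proposal is correct and is essentially the paper's own argument: the paper gives no separate proof and reads the corollary directly off the proof of Proposition~\ref{prop:noncover}, whose only use of $g$ is that its score order is consistent with the order by $\cos(\cdot,Q)$. That is your steps one and three, with step three rightly making ``capacity preserves monotonicity'' a hypothesis on the scorer class rather than a derived fact. Your step two repairs a slip in the paper, whose claim that peak-A candidates rank \emph{strictly} above peak-B ones does not follow from a merely nondecreasing $g$ (a constant $g$ with adversarial tie-breaking defeats it). You are also right to treat the final ``only'' sentence as interpretive: read literally it is false, since MMR uses only the single query point yet selects a peak-B chunk when the peak-A chunks are near-duplicates and the relevance gap is small. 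Your escape witness should, however, be pinned to a concrete configuration, because in general the first greedy pick need not saturate $q_A$, and a peak-B chunk may then enter only at a later step rather than at $k=2$.
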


Corollary~\ref{cor:capacity} is the structural upper bound referenced in Contribution~1: the limitation is a property of the single-point \emph{representation} on Axis~A, orthogonal to any engineering improvement of the scorer. Cross-encoder rerankers fall outside the formal hypothesis, yet they too score each chunk independently against the single query, lacking both a multi-point demand representation (Axis~A) and coverage-awareness of $S$ (Axis~B); we thus expect---and empirically confirm in Section~\ref{sec:experiments}---the same bimodal-coverage failure.

\subsection{Distributional Modeling of Information Needs}

To break the single-point assumption, we model the information need of a query as a discrete distribution defined over the candidate set.

\begin{definition}[Information Demand Proxy]
\label{def:proxy}
Given a query $Q$ and a candidate set $C = \{c_1, \ldots, c_K\}$, the Information Demand Proxy distribution $\mathcal{P}_Q$ of $Q$ is defined as:
\begin{equation}
    \mathcal{P}_Q : C \to [0,1], \quad \sum_{i=1}^{K} \mathcal{P}_Q(c_i) = 1
\end{equation}
where the weight $\mathcal{P}_Q(c_i)$ is jointly determined by two types of signals: global relevance $r_i$, which measures the general relevance of the retrieval between $c_i$ and $Q$, and local dimensional coverage $L_i$, which captures the contribution of $c_i$ to a specific sub-question dimension of $Q$. The construction of $\mathcal{P}_Q$ is detailed in Sections~\ref{sec:subquery} and \ref{sec:fusion}.
\end{definition}

\begin{definition}[Demand Coverage Set Optimization Objective]
\label{def:coverage}
Given a budget $k$, we reformulate context selection to maximize demand-weighted coverage of $\mathcal{P}_Q$ by the selected set:
\begin{equation}
\label{eq:facility}
    S^* = \underset{S \subseteq C,\, |S|=k}{\arg\max}\ F_Q(S),
    \quad
    F_Q(S) = \sum_{i=1}^{K} \mathcal{P}_Q(c_i)\,\max_{s \in S}\,\mathrm{sim}(c_i, s),
\end{equation}
where $\mathrm{sim}(c_i,s)=\max(\cos(c_i,s),0)$. Each unit of demand mass $\mathcal{P}_Q(c_i)$ is served by its best-matching selected chunk, so $F_Q$ is the demand-weighted facility-location coverage of $S$. Crucially, the coverage measure is anchored to the demand distribution $\mathcal{P}_Q$ and \emph{not} to the single query point: no $\cos(s,Q)$ term enters $F_Q$, so a chunk serving a far-from-$Q$ peak (Peak~B) contributes exactly as much as one serving a near peak, provided it reduces uncovered demand.
\end{definition}

\paragraph{Submodularity and the OT realization.}

\begin{lemma}
\label{lem:facility}
For a fixed demand point $c_i$, the function $g_i(S)=\max_{s\in S}\mathrm{sim}(c_i,s)$ with $\mathrm{sim}(c_i,s)=\max(\cos(c_i,s),0)\ge 0$ and $g_i(\varnothing)=0$ is non-decreasing monotone and submodular in $S$.
\end{lemma}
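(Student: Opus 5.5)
The plan is to verify monotonicity and submodularity of $g_i(S)=\max_{s\in S}\mathrm{sim}(c_i,s)$ directly from the definition of a maximum over a set. Throughout, we use the convention $g_i(\varnothing)=0$. We also use the fact that every value $\mathrm{sim}(c_i,s)=\max(\cos(c_i,s),0)$ is nonnegative, so $g_i(S)\ge 0=g_i(\varnothing)$ for every $S$.

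\textbf{Monotonicity.} For $S\subseteq T\subseteq C$, the maximum over $T$ ranges over a superset of the values ranged over by $S$, so $g_i(S)\le g_i(T)$. The only delicate case is $S=\varnothing$. Here the claim reduces to $0\le g_i(T)$, which holds precisely because $\mathrm{sim}\ge 0$. This is exactly why the truncation $\max(\cos,0)$ is built into the definition: with raw cosine, adding the first element could decrease the value below $0$.

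\textbf{Submodularity.} We show diminishing returns. Fix $S\subseteq T\subseteq C$ and $x\notin T$, and write $v=\mathrm{sim}(c_i,x)$. For any set $U$ we have $g_i(U\cup\{x\})=\max(g_i(U),v)$, which is valid also for $U=\varnothing$ since $\max(0,v)=v$ when $v\ge0$. Hence the marginal gain is
\begin{equation*}
\Delta_x(U)=\max(g_i(U),v)-g_i(U)=\bigl(v-g_i(U)\bigr)^+ .
\end{equation*}
The map $t\mapsto (v-t)^+$ is non-increasing in $t$, and $g_i(S)\le g_i(T)$ by monotonicity. Therefore $\Delta_x(S)\ge\Delta_x(T)$, which is the diminishing-returns characterization of submodularity.

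No step is a genuine obstacle. The only point requiring care is handling the empty set consistently in both parts, which the nonnegativity of $\mathrm{sim}$ resolves. As a remark for the subsequent use of this lemma, $F_Q$ is a nonnegative combination (with weights $\mathcal{P}_Q(c_i)\ge0$) of the functions $g_i$. Since monotone submodular functions are closed under nonnegative combinations, $F_Q$ is itself monotone submodular. This yields the classical greedy $1-1/e$ guarantee.
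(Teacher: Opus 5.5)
Your proposal is correct and follows essentially the same argument as the paper: monotonicity because a maximum over a superset cannot decrease, and submodularity by writing the marginal gain as $\bigl(\mathrm{sim}(c_i,x)-g_i(U)\bigr)^+$, which is non-increasing in $g_i(U)$. Your explicit handling of the empty set via $\mathrm{sim}\ge 0$ spells out a case the paper leaves implicit.
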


\begin{IEEEproof}
\emph{Monotonicity.} For $A\subseteq B$, the maximum over a larger set cannot decrease, so $g_i(A)\le g_i(B)$.
\emph{Submodularity.} Let $A\subseteq B\subseteq C$ and $e\in C\setminus B$. Write $a=\max_{s\in A}\mathrm{sim}(c_i,s)$ and $b=\max_{s\in B}\mathrm{sim}(c_i,s)$, so $a\le b$ by $A\subseteq B$. Then $g_i(A\cup\{e\})-g_i(A)=\max(0,\mathrm{sim}(c_i,e)-a)\ge\max(0,\mathrm{sim}(c_i,e)-b)=g_i(B\cup\{e\})-g_i(B)$, since $x\mapsto\max(0,x)$ is non-decreasing.
\end{IEEEproof}

\begin{proposition}
$F_Q(S)=\sum_{i=1}^{K}\mathcal{P}_Q(c_i)\,g_i(S)$ is non-decreasing monotone and submodular.
\end{proposition}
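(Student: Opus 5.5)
The plan is to reduce the statement to Lemma~\ref{lem:facility} by using the fact that monotonicity and submodularity are preserved under nonnegative linear combinations. The weights here are the demand masses $\mathcal{P}_Q(c_i)$. By Definition~\ref{def:proxy} each satisfies $\mathcal{P}_Q(c_i)\in[0,1]$, so in particular each is nonnegative, and this is the only property of $\mathcal{P}_Q$ I will need. The normalization $\sum_i\mathcal{P}_Q(c_i)=1$ plays no role.

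\emph{Monotonicity.} Fix $A\subseteq B\subseteq C$. Lemma~\ref{lem:facility} gives $g_i(A)\le g_i(B)$ for every demand point $c_i$. I multiply each of these inequalities by $\mathcal{P}_Q(c_i)\ge 0$, which preserves its direction, and sum over $i=1,\dots,K$. The result is $F_Q(A)\le F_Q(B)$.

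\emph{Submodularity.} Fix $A\subseteq B\subseteq C$ and $e\in C\setminus B$. Because $F_Q$ is a finite sum, its marginal gain splits term by term:
\begin{equation*}
F_Q(A\cup\{e\})-F_Q(A)=\sum_{i=1}^{K}\mathcal{P}_Q(c_i)\bigl(g_i(A\cup\{e\})-g_i(A)\bigr).
\end{equation*}
The same identity holds with $B$ in place of $A$. For each $i$, Lemma~\ref{lem:facility} says the bracketed gain at $A$ is at least the bracketed gain at $B$. Weighting by $\mathcal{P}_Q(c_i)\ge0$ and summing gives the diminishing-returns inequality for $F_Q$.

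I also want to record that $F_Q(\varnothing)=0$, since each $g_i(\varnothing)=0$. This normalization, together with monotonicity and submodularity, is what the classical greedy $(1-1/e)$ guarantee requires.

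The argument is routine, and I do not expect a real obstacle. The one point to check is that nonnegativity of the weights is essential: a negative weight would reverse the inequality in the corresponding term. Here it is guaranteed by Definition~\ref{def:proxy}.
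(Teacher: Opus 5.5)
Your proposal is correct and follows the paper's proof: you take nonnegativity of $\mathcal{P}_Q(c_i)$ from Definition~\ref{def:proxy}, apply Lemma~\ref{lem:facility} to each $g_i$, and use closure of monotone submodular functions under nonnegative weighted sums. The only difference is that you verify the closure step term by term and note $F_Q(\varnothing)=0$ for the greedy guarantee, where the paper simply cites the closure as a standard fact.
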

\begin{IEEEproof}
By Definition~\ref{def:proxy}, $\mathcal{P}_Q(c_i)\ge 0$. A non-negative weighted sum of monotone submodular functions is monotone submodular. By Lemma~\ref{lem:facility} each $g_i$ is monotone submodular, so $F_Q$ is monotone submodular. Consequently, greedy maximization achieves a $(1-1/e)$ approximation \cite{nemhauser1978}.
\end{IEEEproof}

Two consequences follow. \emph{First, non-redundancy is intrinsic}: by submodularity, the marginal gain of a chunk that duplicates demand already served by $S$ is automatically near zero---diminishing returns are a property of the objective, not an externally added penalty. \emph{Second}, the greedy algorithm of Section~\ref{sec:greedy} inherits the classical $(1-1/e)$ optimality guarantee \cite{nemhauser1978}.

We realize $F_Q$ through the entropic Sinkhorn--Wasserstein distance between $\mathcal{P}_Q$ and the demand-anchored uniform coverage measure of the set,
\begin{equation}
\label{eq:uniform}
    \mathcal{E}_S = \frac{1}{|S|}\sum_{s \in S} \delta_s,
\end{equation}
so that minimizing $W_\varepsilon(\mathcal{P}_Q,\mathcal{E}_S)$ acts as a \emph{smoothed coverage proxy} for maximizing $F_Q(S)$ rather than an exact reformulation: as $\varepsilon\to 0$ the transport plan concentrates each demand point onto its nearest selected chunk, so $-W_\varepsilon$ becomes increasingly aligned with $F_Q$ and shares its nearest-server assignment in the limit. The Sinkhorn form thus serves only as a differentiable, numerically stable surrogate for the marginal coverage gain, while the optimality guarantee is carried by the exact submodular objective $F_Q$.

The optimization objectives of the three types of methods are compared in Table~\ref{tab:objective_comparison}.

\begin{table}[htbp]
\caption{Comparison of Optimization Objectives Across Method Categories}
\label{tab:objective_comparison}
\centering
\renewcommand{\arraystretch}{1.3}
\resizebox{\columnwidth}{!}{
\begin{tabular}{lccc}
\toprule
\textbf{Method Category} & \textbf{Optimization Objective} & \textbf{Set-Aware} & \textbf{Demand Dimension} \\
\midrule
Chunk-wise Ranking
    & $\max (\sum f(c_i, Q))$
    & $\times$
    & Single Point \\
DPP and MMR
    & $\max (\det L) \ /\ \max (\text{MMR})$
    & $\checkmark$
    & Single Point, Isotropic \\
\textbf{GeoRAG}
    & $\max F_Q(S) \,\equiv\, \min W_\varepsilon(\mathcal{P}_Q, \mathcal{E}_S)$
    & $\checkmark$
    & \textbf{Multi-dimensional} \\
\bottomrule
\end{tabular}
}
\end{table}

\subsection{Three Core Challenges in Formal Optimization}
\label{sec:challenges}

Direct optimization of the objective in Definition~\ref{def:coverage} faces three independent core challenges, detailed as follows.

\textbf{P1: Unobservability of the Demand Distribution.} $\mathcal{P}_Q$ is a latent variable regarding the structural information need of the query. It cannot be directly inferred from a single vector $\text{emb}(Q)$, nor are annotated data available for supervised learning. The true sub-questions required by the query are completely unknown at inference time. We construct an approximate proxy through the generation of two-stage diverse sub-queries in Sections~\ref{sec:subquery} and \ref{sec:fusion}.

\textbf{P2: Unreliability of Sub-query Signals.} Even when using LLMs to generate sub-queries to approximate the multi-dimensional structure of $\mathcal{P}_Q$, approximately 10--20\% of the generated sub-queries exhibit semantic drift from the original query or high redundancy with pairwise cosine similarity exceeding $0.95$. Encoding these low-quality sub-queries into $\mathcal{P}_Q$ with equal weights will systematically distort the demand distribution and mislead the subsequent set-aware selection. We address this using a reverse-validation quality weighting mechanism in Section~\ref{sec:weighting}.

\textbf{P3: Intractability of Combinatorial Optimization.} Even if $\mathcal{P}_Q$ is known, the combinatorial space to precisely solve the optimal set is $\binom{200}{5} \approx 2.5 \times 10^{9}$. Furthermore, each evaluation of the entropic coverage surrogate requires calling an OT solver with a complexity of $O(n^3)$, making a brute-force search infeasible. In Section~\ref{sec:greedy}, the submodularity of $F_Q$ lets us reduce this problem to an iterative greedy selection process with linear complexity of $k$-steps and a $(1-1/e)$ guarantee.

\section{Method}
\label{sec:method}

GeoRAG jointly resolves the three core challenges formalized in Section~\ref{sec:challenges} through six stages, with the complete pipeline illustrated in Figure~\ref{fig:pipeline}.

\begin{figure*}[htbp]
    \centering
    \includegraphics[width=\linewidth]{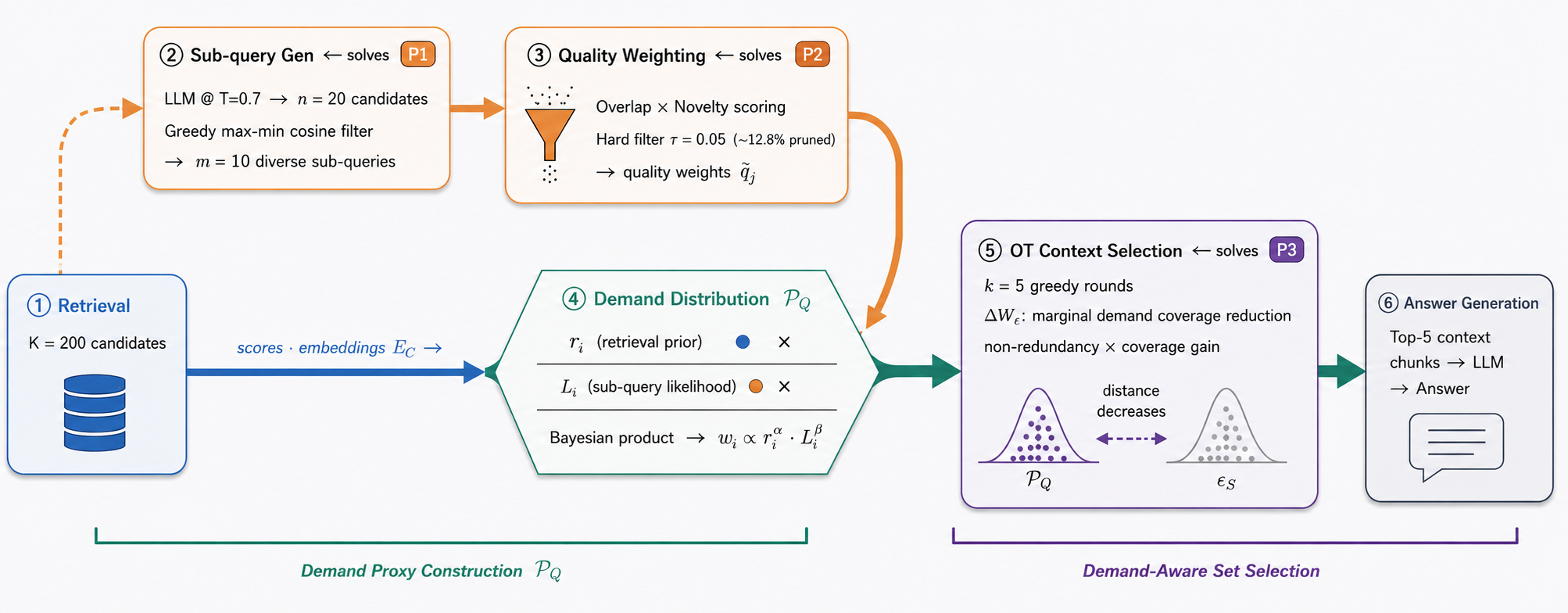}
    \caption{The six-stage pipeline of GeoRAG. }
    \label{fig:pipeline}
\end{figure*}

\subsection{Candidate Chunk Retrieval}

Given a query $Q$, we use Qwen3-Embedding-8B as the encoder to retrieve $K=200$ candidate chunks from a million-scale corpus via the FAISS IndexFlatIP index. This outputs the candidate set $C = \{c_1, \ldots, c_{200}\}$, the retrieval relevance scores $\text{score} \in \mathbb{R}^{200}$, and the embedding matrix $E_C \in \mathbb{R}^{200 \times d}$.

\subsection{Diverse Sub-query Generation and Quality Weighting}
\label{sec:subquery}

\subsubsection{Two-stage Diverse Sub-query Generation}

The single vector $\text{emb}(Q)$ is merely a single-point projection of the query's information need in the embedding space, failing to express its multi-dimensional structure. Multiple semantically dispersed sub-queries can approximate this structure, where each sub-query corresponds to a semantic sub-dimension of the query and collectively forms a multi-modal coverage. In the first stage, Qwen3-4B with a sampling temperature of $T=0.7$ generates $n=20$ candidate sub-queries for query $Q$ .
In the second stage, a greedy max-min cosine filter is applied to the 20 candidates to iteratively select $m=10$ sub-queries that are the most dispersed in the embedding space:
\begin{equation}
    q_{t+1} = \arg\max_{q \notin \mathcal{S}_t}\ \min_{s \in \mathcal{S}_t}\bigl(1 -
    \cos(q, s)\bigr).
\end{equation}
At each step, the candidate with the maximum cosine distance to the currently selected set $\mathcal{S}_t$ is chosen. This ensures that the 10 sub-queries uniformly cover multiple semantic dimensions of the query rather than clustering near the primary dimension. The UMAP visualization in Figure~\ref{fig:umap} confirms the improvement in semantic dispersion: the average pairwise cosine similarity drops from $0.87$ (single-stage) to $0.54$ (two-stage).

\begin{figure}[htbp]
    \centering
    \includegraphics[width=\linewidth]{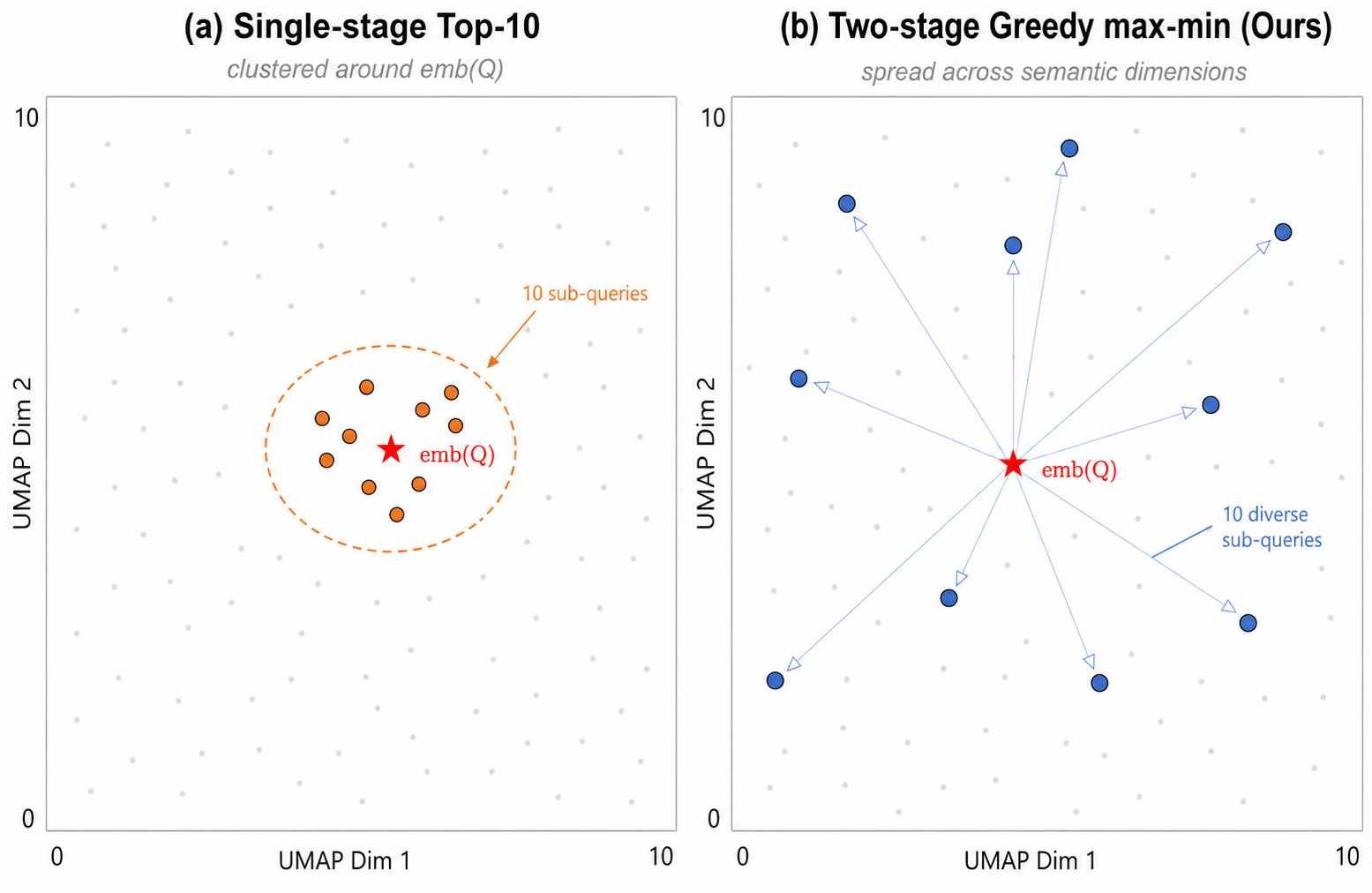}
    \caption{UMAP distribution comparison of sub-queries.}
    \label{fig:umap}
\end{figure}

\subsubsection{Reverse-Validation Quality Weighting}
\label{sec:weighting}

The quality of LLM-generated sub-queries varies significantly, with approximately 10--20\% exhibiting semantic drift from the original query or high redundancy with pairwise cosine similarity exceeding $0.95$. Encoding these directly into $\mathcal{P}_Q$ with uniform weights injects a systematic bias into the demand distribution. For each sub-query $q_j$, we independently retrieve the top-20 results $R_j$ using its embedding and compute two unsupervised quality metrics:
\begin{equation}
    \text{Quality}(q_j) = \underbrace{\frac{|R_j \cap C_{200}|}{20}}_{\text{Overlap}}
    \times \underbrace{\frac{|R_j \setminus \bigcup_{k \neq j} R_k|}{|R_j|}}_{\text{Novelty}}
\end{equation}
Overlap measures the recall alignment between the sub-query and the main candidate set, penalizing semantic drift since drifted sub-queries yield retrieval results with no intersection with $C_{200}$. Novelty measures the independent coverage increment brought by the sub-query, penalizing redundancy since sub-queries with highly overlapping retrieval results have zero independence. The product form requires the sub-query to satisfy both alignment and independence simultaneously; lacking either reduces the weight to zero.

We apply hard filtering to sub-queries with a quality score below the threshold $\tau_{\text{RV}}=0.05$ and normalize the remaining sub-queries into quality weighting coefficients:
\begin{equation}
    \tilde{q}_j = \frac{\text{Quality}(q_j)}{\sum_k \text{Quality}(q_k)}.
\end{equation}

\subsection{Construction of the Information Demand Proxy Distribution}
\label{sec:fusion}

Once the two-stage sub-queries $\{q_j\}_{j=1}^m$ and their quality weighting coefficients $\{\tilde{q}_j\}$ are available, we construct the Information Demand Proxy distribution $\mathcal{P}_Q$ by fusion of two complementary signals.

The general relevance between a candidate chunk $c_i$ and the query $Q$ is captured by the retrieval prior $r_i$, obtained by softmax normalization of the retrieval scores:
\begin{equation}
    r_i = \text{softmax}(\text{score}_i).
\end{equation}

The coverage contribution of the candidate chunk $c_i$ to each sub-dimension of the query is characterized by the quality-weighted sub-query likelihood $L_i$. Let $E_Q \in \mathbb{R}^{m \times d}$ denote the embedding matrix of the retained sub-queries $m$, and let $\tilde{\boldsymbol{q}} \in \mathbb{R}^{m}$ collect their quality weights. Then:
\begin{align}
    L_i &= \sum_{j=1}^{m} \tilde{q}_j \cdot \max\bigl(\cos(c_i, q_j),\ 0\bigr) \notag \\
        &\Longleftrightarrow\quad
         \boldsymbol{L} = \mathrm{clip}(E_C \cdot E_Q^\top,\ 0,\ 1)\cdot\tilde{\boldsymbol{q}}.
\end{align}

These two signals are fused via a Bayesian product formulation:
\begin{equation}
    w_i = \frac{r_i^{\alpha} \cdot L_i^{\beta}}{\sum_j r_j^{\alpha} \cdot L_j^{\beta}},
    \quad \alpha = \beta = 1.
\end{equation}
The product form requires candidate chunks to possess both global relevance and sub-dimensional coverage contribution simultaneously. Candidates that have high retrieval scores but cover no sub-questions, or are highly similar to a certain sub-query but have low overall retrieval relevance, are down-weighted, effectively filtering out noisy candidates. The design space of $\mathcal{P}_Q$ under different $(\alpha,\beta)$ settings is summarized in Table~\ref{tab:degradation}.

\begin{table}[htbp]
\caption{Degradation Spectrum of $\mathcal{P}_Q$ Under Different $\alpha$ and $\beta$ Settings}
\label{tab:degradation}
\centering
\renewcommand{\arraystretch}{1.3}
\resizebox{\columnwidth}{!}{
\begin{tabular}{cclc}
\toprule
$\alpha$ & $\beta$ & \textbf{Meaning} & \textbf{Ablation Variant} \\
\midrule
0 & 0 & Uniform Distribution          & No-information Lower Bound \\
1 & 0 & Pure Retrieval Prior           & Variant A, Main Ablation Baseline \\
0 & 1 & Pure Sub-query Likelihood      & Likelihood Only \\
\textbf{1} & \textbf{1} & Multi-dimensional Demand Proxy & \textbf{GeoRAG} \\
\bottomrule
\end{tabular}
}
\end{table}

\subsection{Greedy Coverage Set Selection}
\label{sec:greedy}

After $\mathcal{P}_Q$ is constructed, we solve the demand coverage objective of Definition~\ref{def:coverage}. Since exact optimization is intractable due to P3, we exploit the submodularity of $F_Q$ and employ greedy maximization: at each step, we add the chunk with the maximum marginal coverage gain $\Delta F_Q(c_i\mid S)=F_Q(S\cup\{c_i\})-F_Q(S)$ to the selected set $S$, repeating this $k=5$ times. The full procedure is given in Algorithm~\ref{alg:greedy}. In the entropic realization, the marginal gain is estimated by the Sinkhorn--Wasserstein marginal coverage reduction in the following factorized surrogate form:
\begin{equation}
    c^* = \arg\max_{c_i \notin S}\
    \underbrace{\bigl(1 - \max_{s \in S}\cos(c_i, s)\bigr)}_{\text{reassignment gate}}
    \times
    \underbrace{\Delta W_\varepsilon(c_i \mid S,\ \mathcal{P}_Q)}_{\text{reclaimed demand mass}}
\end{equation}

The reduction in demand coverage $\Delta W_\varepsilon$ measures the extent to which the addition $c_i$ brings the coverage distribution $\mathcal{E}_S$ closer to $\mathcal{P}_Q$:
\begin{equation}
    \Delta W_\varepsilon(c_i \mid S, \mathcal{P}_Q) = W_\varepsilon(\mathcal{P}_Q,\
    \mathcal{E}_S) - W_\varepsilon(\mathcal{P}_Q,\ \mathcal{E}_{S \cup \{c_i\}}).
\end{equation}

\begin{algorithm}[t]
\caption{Greedy Demand-Coverage Selection}
\label{alg:greedy}
\begin{algorithmic}[1]
\STATE \textbf{Input:} candidates $C$, demand proxy $\mathcal{P}_Q$, budget $k$, regularizer $\varepsilon$, query $Q$
\STATE \textbf{Output:} selected set $S$ with $|S|=k$
\STATE $s_1 \leftarrow \arg\max_{c_i\in C}\cos(c_i,Q)$ \quad // cold start
\STATE $S \leftarrow \{s_1\}$
\FOR{$t = 2$ \TO $k$}
    \FORALL{$c_i \in C\setminus S$}
        \STATE $\text{gate}_i \leftarrow 1 - \max_{s\in S}\cos(c_i,s)$
        \STATE $\Delta W_i \leftarrow W_\varepsilon(\mathcal{P}_Q,\mathcal{E}_S) - W_\varepsilon(\mathcal{P}_Q,\mathcal{E}_{S\cup\{c_i\}})$
        \STATE $\text{score}_i \leftarrow \text{gate}_i \cdot \Delta W_i$
    \ENDFOR
    \STATE $c^* \leftarrow \arg\max_{c_i\in C\setminus S}\text{score}_i$
    \STATE $S \leftarrow S \cup \{c^*\}$
\ENDFOR
\RETURN $S$
\end{algorithmic}
\end{algorithm}

In the Sinkhorn implementation, the cost matrix uses the cosine distance $M_{ij} = 1 - \cos(c_i, s_j)$, with regularization coefficient $\varepsilon = 0.01$. A log-domain implementation is adopted for numerical stability (Algorithm~\ref{alg:sinkhorn}), with a maximum of 100 iterations. When $S = \emptyset$ on cold start, the first chunk is initialized using $\arg\max_i \cos(c_i, Q)$ to prevent coverage estimation from degenerating due to an empty selected set.

\begin{algorithm}[t]
\caption{Log-domain Sinkhorn for $W_\varepsilon(\mathcal{P}_Q,\mathcal{E}_S)$}
\label{alg:sinkhorn}
\begin{algorithmic}[1]
\STATE \textbf{Input:} cost $M\in\mathbb{R}^{K\times|S|}$, marginals $\mathbf{p}=\mathcal{P}_Q$, $\mathbf{e}=\mathcal{E}_S$, $\varepsilon$, iters$=100$
\STATE $\mathbf{f}\leftarrow\mathbf{0}_K,\ \mathbf{g}\leftarrow\mathbf{0}_{|S|}$
\FOR{$\ell=1$ \TO iters}
    \STATE $f_i \leftarrow \varepsilon\log p_i - \varepsilon\log\!\sum_j \exp\!\bigl(\tfrac{g_j-M_{ij}}{\varepsilon}\bigr)$
    \STATE $g_j \leftarrow \varepsilon\log e_j - \varepsilon\log\!\sum_i \exp\!\bigl(\tfrac{f_i-M_{ij}}{\varepsilon}\bigr)$
\ENDFOR
\STATE $\pi_{ij}\leftarrow \exp\!\bigl(\tfrac{f_i+g_j-M_{ij}}{\varepsilon}\bigr)$
\RETURN $W_\varepsilon=\sum_{ij}\pi_{ij}M_{ij}$
\end{algorithmic}
\end{algorithm}

\paragraph{Why the two factors are both intrinsic.}
The two factors above are not an objective plus an external regularizer; they are the two facets of the same submodular marginal gain $\Delta F_Q(c_i\mid S)$. The marginal gain of the facility-location objective~\eqref{eq:facility} is non-zero only where $c_i$ becomes the new nearest server for some demand region and equals the demand mass it reclaims there. The first factor is the \emph{reassignment gate}---it is small exactly when $c_i$ duplicates an already-selected chunk, which is the facility-location condition that intrinsically suppresses redundancy---and the second factor, computed via Sinkhorn, is the \emph{reclaimed demand mass}. Their product is therefore a tractable estimate of $\Delta F_Q(c_i\mid S)$ in which redundancy suppression is a property of the coverage objective itself, consistent with the submodularity established in Section~\ref{sec:formulation}.

The core of the set-aware mechanism (Axis~B) is that once a certain semantic dimension in $\mathcal{P}_Q$ is sufficiently covered by $\mathcal{E}_S$, the marginal gain of any new chunk in that dimension approaches zero, and the greedy selection automatically shifts toward the direction with the largest coverage gap. Because $F_Q$ is monotone submodular, greedy maximization of $F_Q$ enjoys the classical $(1-1/e)$ optimality guarantee \cite{nemhauser1978}; the factorized Sinkhorn surrogate we run provides a smoothed estimate of each marginal gain $\Delta F_Q(c_i\mid S)$, with the entropic regularizer $\varepsilon$ controlling how closely the proxy tracks the exact gain (Section~\ref{sec:formulation}).

\section{Experiments}
\label{sec:experiments}

\subsection{Experimental Settings}

\textbf{Datasets.} The experiments cover six open-domain QA benchmarks, grouped by task type, as summarized in Table~\ref{tab:datasets}.

\begin{table}[htbp]
\caption{Overview of Evaluation Benchmarks}
\label{tab:datasets}
\centering
\renewcommand{\arraystretch}{1.3}
\resizebox{\columnwidth}{!}{
\begin{tabular}{llcc}
\toprule
\textbf{Type} & \textbf{Datasets} & \textbf{Test Set Size} & \textbf{Metrics} \\
\midrule
Single-hop Factual   & NQ, TriviaQA        & 3,610 / 11,313 & EM, F1   \\
Multi-hop Reasoning  & HotpotQA, 2WikiMHQA & 7,405 / 12,576 & EM, F1   \\
Ambiguous Open-ended & ASQA                & 948            & EM, F1   \\
Fact Verification    & FEVER               & 9,999          & Accuracy \\
\bottomrule
\end{tabular}
}
\end{table}

\textbf{Models and Corpus.} Qwen3-Embedding-8B is used as the embedding model with FAISS IndexFlatIP. Both the generator and the sub-query generation utilize Qwen3-4B. The corpus is a 1-million-chunk subset of the December 2018 Wikipedia dump, obtained by retaining, for every query in our benchmarks, the union of all gold supporting passages together with their top-ranked dense neighbors, then deduplicating to 1M 100-word chunks. The gold passages are used only to construct this compact evaluation corpus, never by any retriever or selection method at inference time: all methods receive only the query and operate on the identical corpus. To confirm that GeoRAG's selection gain does not depend on this compact-corpus construction, Section~\ref{sec:fullwiki} additionally reports a full-Wikipedia (no-gold-injection) experiment. On ASQA we report the standard STR-EM (string Exact Match against the set of acceptable short answers); for brevity we denote it ``EM'' in all tables. GraphRAG \cite{edge2024graphrag} is used here purely as a retriever: we take the passage-level nodes it surfaces for a query (ranked by its community-and-entity relevance score) as the top-$200$ candidate pool, discarding its summarization stage, so that all selection methods feed an identical $200$-chunk pool. The Recall@200 of each retriever over the six benchmarks is reported in Table~\ref{tab:recall200}.

\begin{table}[htbp]
\caption{Recall@200 of Six Retrieval Methods Averaged over Six Benchmarks}
\label{tab:recall200}
\centering
\renewcommand{\arraystretch}{1.3}
\resizebox{\columnwidth}{!}{
\begin{tabular}{lc}
\toprule
\textbf{Retrieval Method} & \textbf{Recall@200 (avg.\ over 6 benchmarks)} \\
\midrule
Dense (Qwen3-Embedding-8B) & 93.8 \\
BM25                       & 88.7 \\
Hybrid RRF                 & 92.1 \\
HyDE                       & 91.4 \\
MultiQuery                 & 90.6 \\
GraphRAG                   & 89.3 \\
\bottomrule
\end{tabular}
}
\end{table}

\textbf{Hyperparameters.} The default configuration uses $K=200$, $k=5$, $T=0.7$, $n=20$, $m=10$, $\alpha=\beta=1$, $\varepsilon=0.01$, and $\tau_{\text{RV}}=0.05$. The Pareto-optimal choice of $K=200$ is validated in Section~\ref{sec:k_robustness}.

\textbf{Statistical Testing.} We performed 1,000 bootstrap resamples for each dataset and reported 95\% confidence intervals. The results on NQ, HotpotQA, 2WikiMHQA and ASQA satisfy $p < 0.05$, while TriviaQA and FEVER yield $p > 0.1$, denoted by $\dagger$ in the results tables.

\subsection{Main Results Across Retrieval Methods}

We construct a complete experimental matrix of 6 retrieval methods, pairing direct top-5 selection against top-200 followed by GeoRAG selection to top-5, validating GeoRAG as a retrieval-agnostic post-processing module (Table~\ref{tab:main_results}).

\begin{table*}[htbp]
\caption{Main Experimental Results. Scores are EM across five benchmarks and accuracy for
FEVER, reported over six retrieval methods. $\dagger$ denotes $p > 0.1$ for TriviaQA and
FEVER columns. F1 results follow the same trend across all cells and are omitted for space.}
\label{tab:main_results}
\centering
\renewcommand{\arraystretch}{1.3}
\resizebox{\textwidth}{!}{
\begin{tabular}{llccccccc}
\toprule
\textbf{Method} & \textbf{Candidates} & \textbf{NQ} & \textbf{TriviaQA$^\dagger$}
& \textbf{HotpotQA} & \textbf{2WikiMHQA} & \textbf{ASQA} & \textbf{FEVER$^\dagger$}
& \textbf{Avg} \\
\midrule
Dense (direct)      & top-5             & 44.1  & 65.2 & 33.8 & 29.4 & 36.7 & 74.1 & 47.2 \\
Dense + GeoRAG      & top-200$\to$5     & 54.8  & 70.4 & 43.5 & 37.8 & 46.1 & 75.3 & 54.7 \\
$\Delta$            &                   & \textbf{+10.7} & +5.2 & \textbf{+9.7} & +8.4 & +9.4 & +1.2 & \textbf{+7.5} \\
\midrule
BM25 (direct)       & top-5             & 42.8  & 66.8 & 31.2 & 27.3 & 34.2 & 73.9 & 46.0 \\
BM25 + GeoRAG       & top-200$\to$5     & 52.6  & 71.2 & 40.8 & 35.9 & 43.9 & 74.8 & 53.2 \\
$\Delta$            &                   & +9.8  & +4.4 & +9.6 & +8.6 & \textbf{+9.7} & +0.9 & +7.2 \\
\midrule
Hybrid RRF (direct) & top-5             & 46.2  & 67.4 & 35.4 & 31.1 & 38.9 & 76.1 & 49.2 \\
Hybrid RRF + GeoRAG & top-200$\to$5     & 55.2  & 71.8 & 44.2 & 39.3 & 47.3 & 76.6 & 55.7 \\
$\Delta$            &                   & +9.0  & +4.4 & +8.8 & +8.2 & +8.4 & +0.5 & +6.5 \\
\midrule
HyDE (direct)       & top-5             & 47.1  & 65.8 & 36.1 & 30.8 & 39.8 & 73.2 & 48.8 \\
HyDE + GeoRAG       & top-200$\to$5     & 56.1  & 70.1 & 44.9 & 39.1 & 48.1 & 74.1 & 55.4 \\
$\Delta$            &                   & +9.0  & +4.3 & +8.8 & +8.3 & +8.3 & +0.9 & +6.6 \\
\midrule
MultiQuery (direct) & top-5             & 45.9  & 66.3 & 37.2 & 32.1 & 40.6 & 74.6 & 49.5 \\
MultiQuery + GeoRAG & top-200$\to$5     & 54.8  & 70.9 & 46.1 & 41.2 & 49.4 & 75.8 & 56.4 \\
$\Delta$            &                   & +8.9  & +4.6 & +8.9 & \textbf{+9.1} & +8.8 & +1.2 & +6.9 \\
\midrule
GraphRAG (direct)   & top-5             & 46.3  & 65.9 & 37.8 & 33.4 & 38.1 & 76.8 & 49.7 \\
GraphRAG + GeoRAG   & top-200$\to$5     & 55.4  & 70.6 & 45.7 & 42.1 & 47.6 & 77.2 & 56.4 \\
$\Delta$            &                   & +9.1  & +4.7 & +7.9 & +8.7 & +9.5 & +0.4 & +6.7 \\
\bottomrule
\end{tabular}
}
\end{table*}

GeoRAG yields consistent $+6.5$ to $+7.5$ EM gains across all six retrieval methods, with paired bootstrap $p<0.05$ on every benchmark except TriviaQA and FEVER. The gain correlates positively with candidate set homogeneity: Dense ($+7.5$) and BM25 ($+7.2$) have the most redundant candidate sets and benefit most from GeoRAG's multi-dimensional coverage optimization, whereas Hybrid RRF ($+6.5$) already fuses lexical and semantic signals and starts from higher diversity. This confirms our mechanistic hypothesis: the benefit comes from set selection within the candidate pool rather than from better initial retrieval. As reported in Table~\ref{tab:recall200}, Recall@200 ranges from 88.7\% to 93.8\% with a Spearman correlation of $\rho = -0.23$ ($n=6$, not significant) against the gain, indicating that the evidence is already in the pool and GeoRAG's role is to stop discarding it at selection time.

The method-specific strengths are amplified under GeoRAG. HyDE+GeoRAG is best in NQ (56.1 EM); MultiQuery+GeoRAG is best on HotpotQA (46.1) and ASQA (49.4), as its sub-queries cover candidates for different reasoning steps; GraphRAG+GeoRAG is best in 2WikiMHQA (42.1) and FEVER (77.2), since graph-structure awareness aligns with Wikipedia entity chains. The small FEVER gain ($+0.4$ to $+1.2$) marks the method's applicability boundary: as a single-proposition truth-value task, its information need is concentrated on one dimension. The universal lift over truncation, together with the absence of a positive gain--Recall@200 correlation, confirms that GeoRAG acts as a retrieval-agnostic \emph{selection} module, with gains largest exactly where the pool is most redundant---the regime the single-point barrier predicts.

\subsection{Comparison with Selection Baselines}
\label{sec:baselines}

The previous experiment isolates GeoRAG's gain over naive truncation. To verify that the gain survives comparison with the selection methods criticized in Section~\ref{sec:formulation}, we fix the retriever to Dense and the candidate pool to the same top-$200$, and replace only the selection-of-$5$ strategy. We compare against two isotropic set-diversity methods (MMR~\cite{carbonell1998mmr}, DPP~\cite{kulesza2012dpp}), a strong cross-encoder reranker (BGE-Reranker~\cite{xiao2023bge}), a conflict-sensitive variant of DPP (SMART-RAG~\cite{li2024smartrag}), and the closest greedy set-selection competitor (AdaGReS~\cite{peng2025adagres}). The results are reported in Table~\ref{tab:baselines}.

GeoRAG achieves the best EM on every benchmark (tying BGE-Reranker on FEVER) and an average-EM margin of $+1.9$ over the strongest baseline (AdaGReS). Two observations are central to our claims. First, BGE-Reranker reaches $52.8$ NQ EM---GeoRAG exceeds it by $+2.0$ EM while incurring $35\%$ lower selection latency ($1{,}222$ vs.\ $1{,}890$ ms), and the gap \emph{widens} on multi-hop HotpotQA ($+3.9$) and 2WikiMHQA ($+3.9$), exactly the regime where the single-point barrier of Proposition~\ref{prop:noncover} predicts cross-encoder failure. This is the empirical corroboration of the analysis in Section~\ref{sec:barrier}. Second, the isotropic diversity methods (MMR, DPP) improve over truncation but plateau well below GeoRAG, because they possess only an undirected form of Axis~B and no Axis~A: spreading selections apart in embedding space cannot recover a demand direction the query never points to. AdaGReS, the only baseline with genuine greedy set selection, is the closest competitor, yet it still lacks the multi-dimensional demand model ($\mathcal{P}_Q$), which accounts for its multi-hop shortfall. The advantage is smallest on the uni-dimensional FEVER.
\begin{table*}[htbp]
\caption{Comparison with selection baselines under the Dense retriever (top-$200\to5$,
identical candidate pool and generator).}
\label{tab:baselines}
\centering
\renewcommand{\arraystretch}{1.3}
\resizebox{\textwidth}{!}{
\begin{tabular}{lcccccccr}
\toprule
\textbf{Selection Method} & \textbf{NQ} & \textbf{TriviaQA$^\dagger$}
& \textbf{HotpotQA} & \textbf{2WikiMHQA} & \textbf{ASQA} & \textbf{FEVER$^\dagger$}
& \textbf{Avg} & \textbf{Sel.\ Lat.\ (ms)} \\
\midrule
Cosine top-5 (truncation)        & 44.1 & 65.2 & 33.8 & 29.4 & 36.7 & 74.1 & 47.2 & \textbf{0.4} \\
MMR \cite{carbonell1998mmr}      & 47.5 & 66.8 & 37.2 & 32.0 & 39.8 & 74.3 & 49.6 & 14.6 \\
DPP \cite{kulesza2012dpp}        & 48.3 & 67.1 & 38.0 & 32.8 & 40.6 & 74.4 & 50.2 & 41.8 \\
SMART-RAG \cite{li2024smartrag}  & 50.1 & 68.2 & 40.3 & 34.6 & 42.1 & 74.6 & 51.7 & 118.5 \\
BGE-Reranker \cite{xiao2023bge}  & 52.8 & 69.5 & 39.6 & 33.9 & 41.2 & \textbf{75.3} & 52.1 & 1890.0 \\
AdaGReS \cite{peng2025adagres}   & 51.9 & 69.1 & 41.5 & 35.8 & 43.4 & 74.9 & 52.8 & 96.3 \\
\textbf{GeoRAG}                  & \textbf{54.8} & \textbf{70.4} & \textbf{43.5}
& \textbf{37.8} & \textbf{46.1} & \textbf{75.3} & \textbf{54.7} & 1222.0 \\
\bottomrule
\multicolumn{9}{l}{\small $^\dagger$ $p>0.1$ for GeoRAG vs.\ AdaGReS on TriviaQA and FEVER; $p<0.05$ on all other columns.}
\end{tabular}
}
\end{table*}

\subsection{Robustness to the Corpus: Full-Wikipedia (No-Gold-Injection) Test}
\label{sec:fullwiki}

The main experiments use a compact 1M-chunk corpus that, by construction, retains every query's gold supporting passages, which guarantees high Recall@200. A natural concern is whether GeoRAG's reported gain is partly an artifact of this ``recall-easy'' construction: if the evidence were not guaranteed to be in the pool, the bottleneck might revert from \emph{selection} to \emph{recall}. To rule this out, we re-run the full pipeline on the \emph{entire} FlashRAG~\cite{flashrag} \texttt{wiki18\_100w} Wikipedia collection ($\sim$21M passages) with \emph{no} gold injection: the candidate pool is produced solely by the Dense retriever over the full index, so gold evidence enters the top-$200$ only if the retriever actually surfaces it. We report HotpotQA and 2WikiMHQA, the two multi-hop benchmarks where the single-point barrier predicts the largest selection effect, in Table~\ref{tab:fullwiki}.

\begin{table}[htbp]
\caption{Full-Wikipedia, no-gold-injection test.}
\label{tab:fullwiki}
\centering
\renewcommand{\arraystretch}{1.3}
\resizebox{\columnwidth}{!}{
\begin{tabular}{lcccc}
\toprule
\textbf{Setting} & \textbf{Recall@200} & \textbf{Cosine top-5}
& \textbf{GeoRAG} & $\Delta$ \\
\midrule
\multicolumn{5}{l}{\emph{HotpotQA}} \\
\quad Compact 1M (gold-injected) & 91.7 & 33.8 & 43.5 & $+9.7$ \\
\quad Full Wiki (no gold)        & 71.4 & 27.9 & 35.6 & $+7.7$ \\
\midrule
\multicolumn{5}{l}{\emph{2WikiMHQA}} \\
\quad Compact 1M (gold-injected) & 90.3 & 29.4 & 37.8 & $+8.4$ \\
\quad Full Wiki (no gold)        & 68.2 & 23.6 & 30.5 & $+6.9$ \\
\bottomrule
\end{tabular}
}
\end{table}

As expected, removing gold injection lowers Recall@200 substantially ($91.7\!\to\!71.4$ on HotpotQA, $90.3\!\to\!68.2$ on 2WikiMHQA) and drags down the absolute EM of \emph{both} selectors, confirming that recall does matter in the harder setting. Crucially, however, the gain in GeoRAG \emph{selection} over Cosine top-5 persists ($+7.7$ on HotpotQA, $+6.9$ on 2WikiMHQA), retaining the large majority of the gain observed on the compact corpus. The gain shrinks only modestly, consistent with the mechanism: when the retriever surfaces both demand peaks, GeoRAG still stops the single-point selector from spending all slots on Peak~A; the residual loss comes purely from queries where the harder retrieval setting fails to recall Peak~B at all---a recall failure outside the scope of any selection method. This decouples our central claim from the corpus construction: the selection bottleneck GeoRAG addresses is real even when the evidence is \emph{not} guaranteed to be present.

\subsection{Necessity of Multi-dimensional Demand Distribution Modeling}

In this section, we systematically vary the construction of $\mathcal{P}_Q$ to quantify the independent contribution of multi-dimensional demand modeling compared to single-point priors, as reported in Figure~\ref{fig:demand_ablation}.

\begin{figure}[htbp]
    \centering
    \includegraphics[width=\linewidth]{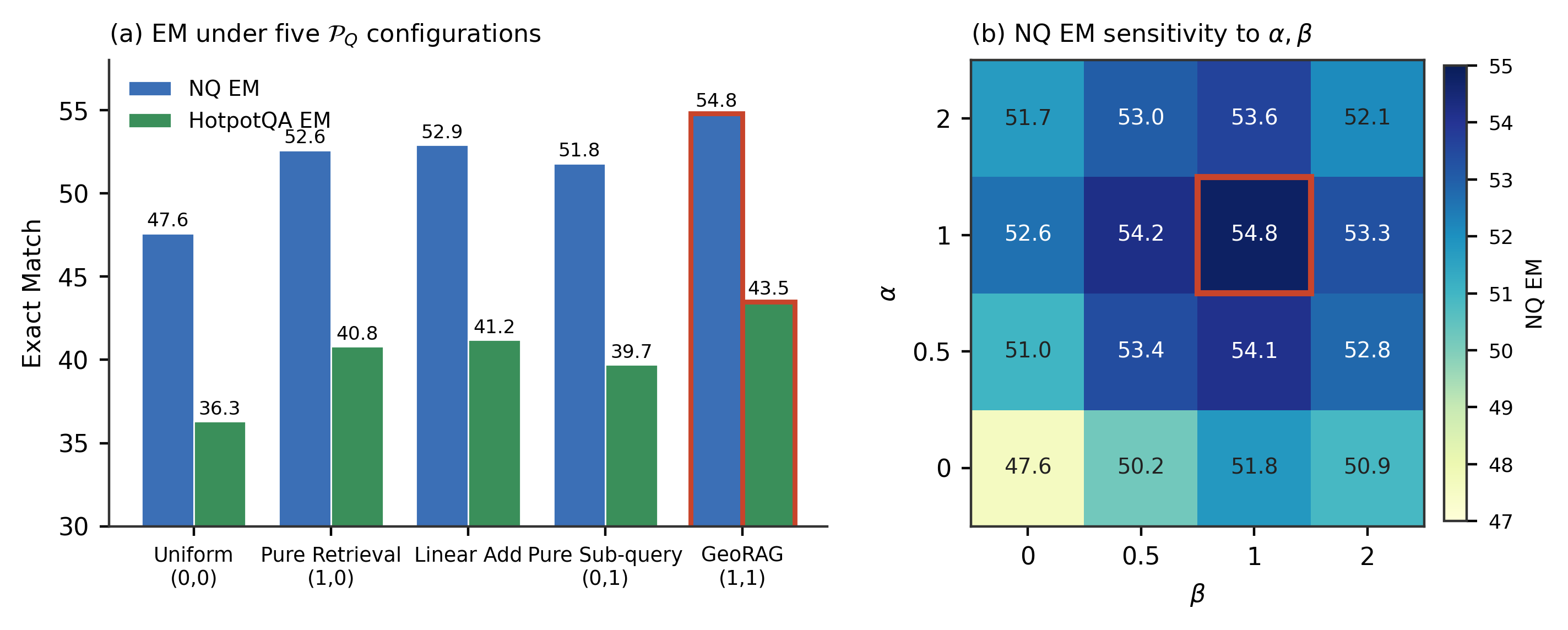}
    \caption{Demand distribution ablation.}
    \label{fig:demand_ablation}
\end{figure}

The degradation from the uniform distribution results in a drop of $7.2$ EM: non-informative weights prevent the coverage estimate from distinguishing the semantic priorities of the candidate chunks, causing the transport plan to degenerate into random coverage. This shows that $\mathcal{P}_Q$ is the key driver of the entire framework (Axis~A). Under the same set-aware framework, replacing $\mathcal{P}_Q$ from a single-point retrieval prior ($\beta=0$) to a Bayesian posterior fused with sub-query likelihood ($\beta=1$) improves HotpotQA by $+2.7$ EM: sub-query decomposition effectively captures the weights of sub-question directions in multi-hop queries. The product form contributes an additional $+1.9$ EM on NQ ($+2.3$ EM on HotpotQA) over linear addition, because additive fusion allows chunks strong in only a single signal to pass through, whereas the multiplicative form effectively filters out such interfering candidates. As shown in the heat map of Figure~\ref{fig:demand_ablation}(b), the NQ EM remains at or above 53.4 within the range $\alpha = \beta \in [0.5, 1.0]$, indicating insensitivity to selection of hyperparameters.

\subsection{Irreplaceability of Set-Aware Coverage Selection}

This section strictly isolates the independent contribution of set-aware coverage selection (Axis~B). The three compared methods share completely identical sub-query generation, reverse validation, and construction of $\mathcal{P}_Q$. The sole difference lies in the strategy to select $k=5$ chunks from the top-200 candidates, as reported in Table~\ref{tab:ot_ablation}.

\begin{table}[htbp]
\caption{Set-Aware Coverage Contribution Isolation.}
\label{tab:ot_ablation}
\centering
\renewcommand{\arraystretch}{1.3}
\resizebox{\columnwidth}{!}{
\begin{tabular}{lccccr}
\toprule
\textbf{Method} & \textbf{NQ EM} & \textbf{HotpotQA EM} & \textbf{2WikiMHQA EM}
& $\Delta$ \textbf{vs GeoRAG} & \textbf{Extra Lat. (ms)} \\
\midrule
SubQ-MaxCos              & 51.5 & 39.2 & 33.9 & $-3.3$ & $+1161$ \\
SubQ-AvgCos              & 52.2 & 40.1 & 34.7 & $-2.6$ & $+1162$ \\
\textbf{GeoRAG}          & \textbf{54.8} & \textbf{43.5} & \textbf{37.8}
& \textbf{---} & $\mathbf{+1222}$ \\
Set-aware coverage gain  & $+2.6$ & $+3.4$ & $+3.1$ & --- & $+60$ \\
\bottomrule
\end{tabular}
}
\end{table}

The scores $L_i$ of SubQ-AvgCos are completely identical to the $\mathcal{P}_Q$ weights in GeoRAG. The difference is that SubQ-AvgCos independently scores and ranks each chunk, whereas GeoRAG maximizes demand coverage in a set-aware manner via the Sinkhorn realization of the facility-location objective. Even when sharing the same demand signal, removing set-aware coverage still results in a $2.6$ EM loss on NQ, and the gap further widens in multi-hop scenarios ($+3.4$ on HotpotQA, $+3.1$ on 2WikiMHQA). The fundamental reason is that independent ranking cannot perceive the coverage status of the selected set and tends to repeatedly select high-scoring chunks in the same semantic direction. The 60 ms extra latency in exchange for a consistent $+2.6$ EM gain proves that set-aware coverage selection cannot be replaced by any chunk-wise scoring scheme.

\subsection{Exact Facility-Location Greedy vs.\ the Sinkhorn Surrogate}
\label{sec:exact_vs_sinkhorn}

Section~\ref{sec:formulation} proves that the coverage objective $F_Q$ (Eq.~\eqref{eq:facility}) is monotone submodular, so its \emph{exact} margin gain $\Delta F_Q(c_i\mid S)$ has a closed form computable in $O(Kk)$ without any optimal-transport machinery, and greedy maximization of this exact gain is the variant to which the classical $(1-1/e)$ guarantee provably applies. The deployed GeoRAG selector instead ranks the candidates by the factorized Sinkhorn surrogate $(\text{gate})\times\Delta W_\varepsilon$ of Section~\ref{sec:greedy} plus a cold start of cosine, which is an \emph{approximation} of $\Delta F_Q$ rather than $\Delta F_Q$ itself. To quantify the cost or benefit of this approximation and to test whether the optimal-transport layer is load-bearing or merely a differentiable convenience, we replace \emph{only} the per-step selection rule while holding the Dense top-$200$ pool, the demand proxy $\mathcal{P}_Q$, the budget $k=5$, and the generator fixed. We additionally report the realized objective value $F_Q(S)$ of the returned set, so the rules are compared on the exact quantity the theory concerns rather than on downstream EM alone, and sweep $\varepsilon$ to make the $\varepsilon\to0$ limit (where the surrogate should approach exact greedy) visible. The results are in Table~\ref{tab:exact_vs_sinkhorn} and Table~\ref{tab:eps_sweep}.

\begin{table}[htbp]
\caption{Exact facility-location greedy vs.\ the Sinkhorn surrogate . }
\label{tab:exact_vs_sinkhorn}
\centering
\renewcommand{\arraystretch}{1.3}
\resizebox{\columnwidth}{!}{
\begin{tabular}{lcccccr}
\toprule
\textbf{Selection rule} & \textbf{NQ} & \textbf{HotpotQA} & \textbf{2WikiMHQA}
& \textbf{Avg EM} & \textbf{$F_Q(S)$} & \textbf{Sel.-rule Lat.\ (ms)} \\
\midrule
Exact-FL greedy            & 54.4 & 42.8 & 37.0 & 44.7 & \textbf{0.842} & \textbf{0.6} \\
$\Delta W_\varepsilon$ only (no gate) & 48.3 & 37.0 & 31.8 & 39.0 & 0.781 & 49.1 \\
\textbf{Sinkhorn (GeoRAG)} & \textbf{54.8} & \textbf{43.5} & \textbf{37.8} & \textbf{45.4} & 0.838 & 52.7 \\
\bottomrule
\end{tabular}
}
\end{table}

\begin{table}[htbp]
\caption{Effect of the Sinkhorn regularizer $\varepsilon$ on NQ. }
\label{tab:eps_sweep}
\centering
\renewcommand{\arraystretch}{1.3}
\begin{tabular}{lcccc}
\toprule
$\varepsilon$ & 0.001 & 0.01 & 0.05 & 0.1 \\
\midrule
NQ EM     & 54.1 & \textbf{54.8} & 54.3 & 53.2 \\
$F_Q(S)$  & \textbf{0.841} & 0.838 & 0.822 & 0.804 \\
\bottomrule
\end{tabular}
\end{table}

The comparison shows that the entropic Sinkhorn surrogate is not merely a differentiable convenience, but a genuine source of robustness. Although exact-FL greedy optimizes the hard objective directly and attains a marginally higher realized $F_Q(S)$ ($0.842$ vs.\ $0.838$), the deployed Sinkhorn selector \emph{outperforms} it on downstream answer quality ($+0.7$ Avg EM, $45.4$ vs.\ $44.7$), with the gap widening on multi-hop HotpotQA ($+0.7$) and 2WikiMHQA ($+0.8$) relative to single-hop NQ ($+0.4$). The gain thus comes not from optimizing $F_Q$ more tightly---exact greedy wins that contest---but from the entropic smoothing itself: the soft transport plan spreads each unit of demand across several near-tied candidate servers rather than committing to a single nearest chunk, hedging against proxy noise in $\mathcal{P}_Q$, precisely in the multi-modal regime where it matters most. The $\varepsilon$-sweep (Table~\ref{tab:eps_sweep}) makes this explicit: NQ EM peaks at an intermediate $\varepsilon=0.01$ and degrades toward both the hard limit ($\varepsilon\to0$, $54.1$) and the over-smoothed regime ($\varepsilon=0.1$, $53.2$), while $F_Q(S)$ decreases monotonically in $\varepsilon$---so the EM-optimal $\varepsilon$ is \emph{decoupled} from the $F_Q$-optimal one, confirming that entropic regularization adds robustness beyond hard coverage maximization rather than acting as a numerical artifact. We accordingly retain the Sinkhorn realization as the default selector, with the $O(Kk)$ exact-FL greedy available as a fast, OT-free fallback that provably inherits the $(1-1/e)$ guarantee but forgoes this smoothing-induced robustness; the ``$\Delta W_\varepsilon$ only'' row further confirms that the reassignment gate is separately load-bearing ($-6.4$ Avg EM when removed), leaving the gate and coverage term as complementary facets of the marginal gain.

\subsection{Component-wise Contribution Analysis}

Table~\ref{tab:component_ablation} reports a component-wise ablation study on NQ, where each row removes one component from the full GeoRAG while keeping the rest unchanged.

\begin{table}[htbp]
\caption{Component-wise Contribution Ablation on NQ.}
\label{tab:component_ablation}
\centering
\renewcommand{\arraystretch}{1.3}
\resizebox{\columnwidth}{!}{
\begin{tabular}{lcccp{3.5cm}}
\toprule
\textbf{Ablation Variant} & \textbf{NQ EM} & \textbf{NQ F1} & $\Delta$ \textbf{EM}
& \textbf{Description} \\
\midrule
w/o Reassignment Gate            & 48.3 & 60.1 & $-6.5$
& Redundancy-suppression facet of coverage gain \\
w/o $\Delta W$ Coverage Term     & 49.5 & 62.0 & $-5.3$
& Reclaimed-demand-mass facet of coverage gain \\
w/o Two-stage Sub-queries        & 52.4 & 65.1 & $-2.4$
& Greedy max-min diversity filtering \\
w/o Reverse Validation           & 54.1 & 67.2 & $-0.7$
& Quality weighting of sub-queries \\
\textbf{GeoRAG}                  & \textbf{54.8} & \textbf{67.9} & \textbf{---} & \\
\bottomrule
\end{tabular}
}
\end{table}

The reassignment gate ($-6.5$ EM) and $\Delta W$ ($-5.3$ EM) are the two largest contributions and are indispensable. In the exact facility-location objective the two coincide with the single marginal gain $\Delta F_Q$ (Section~\ref{sec:greedy}); empirically, however, each remains separately load-bearing under the Sinkhorn realization, so we treat them as \emph{complementary} rather than interchangeable (Section~\ref{sec:exact_vs_sinkhorn}): removing the gate lets coverage be optimized but allows redundant chunks to occupy slots, while removing $\Delta W$ lets the gate repel similar chunks but fails to guide selection toward demand gaps. Two-stage sub-query generation contributes $-2.4$ EM when removed (replaced by single-stage top-10 selection), showing the importance of max-min filtering for sub-query space coverage. Reverse validation achieves $+0.7$ EM at a latency cost of only $8.3$ ms, a highly cost-effective design.

\subsection{Direct Measurement of Demand-Dimension Coverage}
\label{sec:coverage_metric}

The main results measure the \emph{downstream} effect of coverage (final EM). To measure the mechanism \emph{directly---whether} GeoRAG actually covers more demand dimensions than a single-point baseline---we define a \emph{demand dimension coverage rate} on bimodal multi-hop queries and compare Cosine top-5 with GeoRAG in Figure~\ref{fig:coverage}.

\begin{definition}[Demand-dimension coverage rate]
For a query whose gold evidence is partitioned into two semantically independent dimensions (Peak~A / Peak~B), label each selected chunk by the peak it supports. The \emph{per-peak coverage} is the fraction of queries for which at least one selected chunk supports that peak; the \emph{both-peaks coverage} is the fraction for which both peaks are supported.
\end{definition}

\textbf{Protocol.} We restrict ourselves to HotpotQA and 2WikiMHQA with annotated supporting facts. For each query, we map its two gold supporting passages to Peak~A and Peak~B. Each selected chunk is assigned to the peak whose gold passage it matches the most closely (cosine above threshold $0.6$). We report per-peak and both-peaks coverage with paired bootstrap.

\begin{figure}[htbp]
    \centering
    \includegraphics[width=\linewidth]{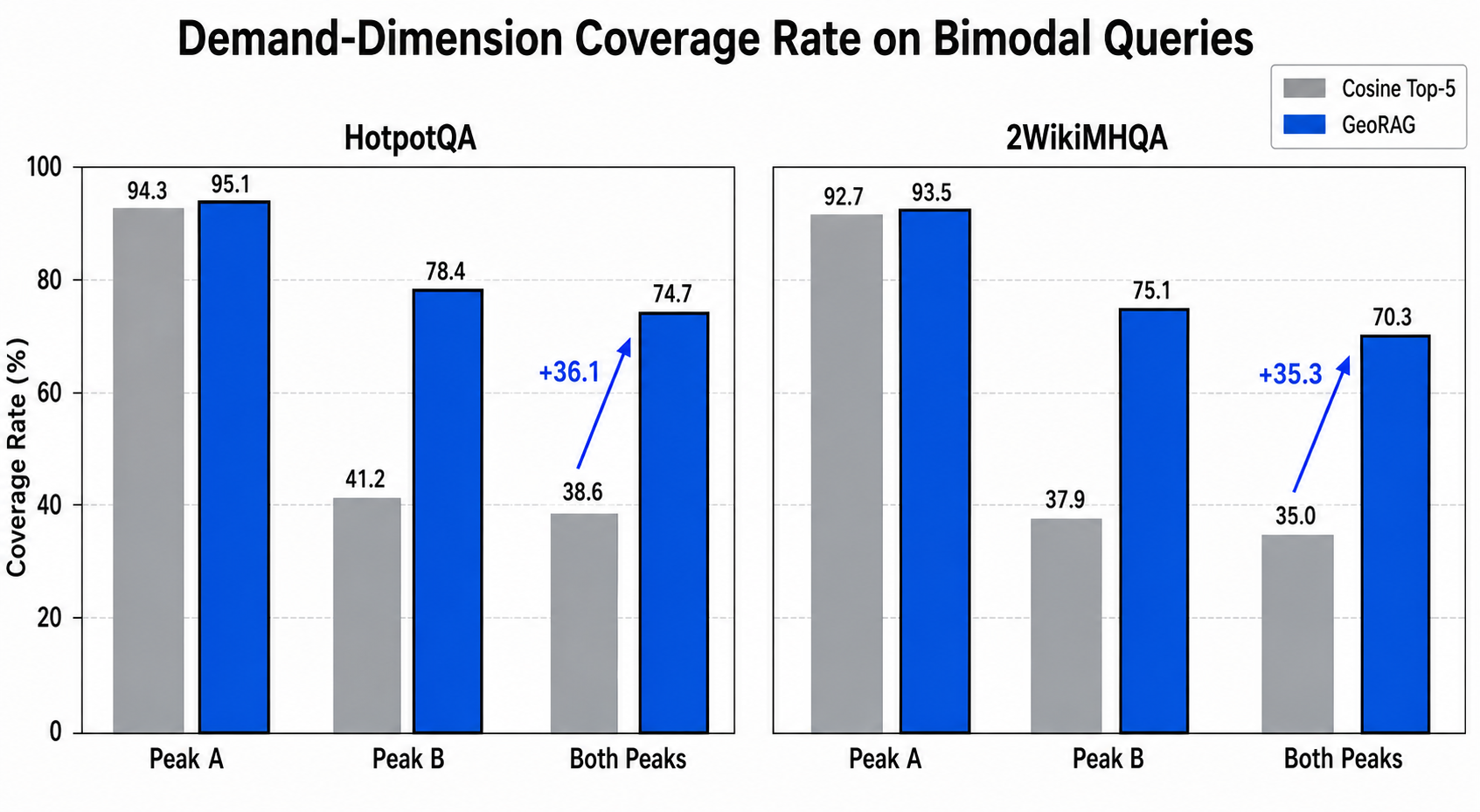}
    \caption{Demand-dimension coverage rate. }
    \label{fig:coverage}
\end{figure}

Both methods saturate Peak~A, but Cosine top-5 covers Peak~B only for $41.2\%$ of HotpotQA queries and $37.9\%$ of 2WikiMHQA, while GeoRAG expands Peak~B coverage to $78.4\%$ and $75.1\%$, respectively. Consequently, both-peaks coverage rises from $38.6\%$ to $74.7\%$ on HotpotQA ($+36.1$) and from $35.0\%$ to $70.3\%$ on 2WikiMHQA ($+35.3$). This coverage gap tracks the downstream EM gap, indicating that the EM improvement is mediated by improved dimensional coverage rather than an unrelated factor.

\subsection{Robustness to the Context Budget $k$}
\label{sec:k_robustness}

Table~\ref{tab:k_robustness} varies $k\in\{3,5,8,10\}$ under the Dense retriever and compares GeoRAG with the Cosine top-$k$ truncation. We also validated the size of the candidate pool: EM increases sharply from $K=10$ to $K=200$ ($+6.1$ EM) and then plateaus ($+0.3$ EM from $K=200$ to $500$ while the sinkhorn latency increases to 137.2 ms, an increase of $2.6\times$, confirming $K=200$ as the Pareto-optimal operating point.

\begin{table}[htbp]
\caption{Robustness to the context budget $k$ (Dense retriever, EM). }
\label{tab:k_robustness}
\centering
\renewcommand{\arraystretch}{1.3}
\resizebox{\columnwidth}{!}{
\begin{tabular}{lcccc}
\toprule
\textbf{Budget} & \textbf{NQ Cosine} & \textbf{NQ GeoRAG} & \textbf{HotpotQA Cosine} & \textbf{HotpotQA GeoRAG} \\
\midrule
$k=3$  & 41.2 & 51.3 \,($+10.1$) & 30.9 & 40.0 \,($+9.1$) \\
$k=5$  & 44.1 & 54.8 \,($+10.7$) & 33.8 & 43.5 \,($+9.7$) \\
$k=8$  & 46.0 & 56.0 \,($+10.0$) & 35.5 & 44.9 \,($+9.4$) \\
$k=10$ & 46.8 & 56.4 \,($+9.6$)  & 36.2 & 45.3 \,($+9.1$) \\
\bottomrule
\end{tabular}
}
\end{table}

The GeoRAG gain is stable across the entire budget range and is slightly larger at small budgets: when only $k=3$ slots are available, the cost of spending all of them on a single demand dimension is highest, so directed coverage matters most. The improvement is not a budget artifact---it reflects a structural difference in \emph{which} chunks are chosen, and persists even when the context window is generous.

\subsection{Generalization Across Sub-query Generators}

Because the sub-query generator is the only learned component touching GeoRAG (used zero-shot, with no fine-tuning), we test whether the method is tied to the specific Qwen3-4B backbone by swapping it for four alternative generators while holding the embedding model, retriever, and selection procedure fixed. The results are in Table~\ref{tab:generator}.

\begin{table}[htbp]
\caption{Generalization across sub-query generators.}
\label{tab:generator}
\centering
\renewcommand{\arraystretch}{1.25}
\resizebox{\columnwidth}{!}{
\begin{tabular}{lcc}
\toprule
\textbf{Sub-query Generator} & \textbf{NQ EM} & \textbf{HotpotQA EM} \\
\midrule
Mistral-7B-Instruct          & 53.9 & 42.4 \\
Llama-3.1-8B-Instruct        & 54.3 & 42.1 \\
\textbf{Qwen3-4B}            & 54.8 & 43.5 \\
Qwen3-8B                     & 55.1 & 43.7 \\
Qwen3.5-4B                   & 55.3 & 44.2 \\
\bottomrule
\end{tabular}
}
\end{table}

All five generators produce EM within a $1.4$-point band on NQ ($2.1$ on HotpotQA) and all outperform every baseline in Table~\ref{tab:baselines}. Stronger generators give a small additional lift, but even an open 4B model preserves most of the gain. GeoRAG's improvement comes from the coverage \emph{formulation} rather than from a privileged backbone; the framework is generator-agnostic and can ride future improvements in sub-query generation without redesign.

\section{Conclusion}
\label{sec:conclusion}

We presented GeoRAG, an unsupervised, training-free, retrieval-agnostic context selection framework that reformulates RAG context selection as Information Demand Coverage Optimization. By building a multi-dimensional Information Demand Proxy $\mathcal{P}_Q$ (Axis~A) and greedily minimizing the Sinkhorn--Wasserstein distance between $\mathcal{P}_Q$ and the coverage measure of the selected set (Axis~B), GeoRAG improves EM by $+6.5$--$+7.5$ over direct top-5 truncation across six open-domain QA benchmarks and six retrieval methods, with the largest gains on multi-hop and ambiguous tasks. We further proved that no query-proximity-monotone selector can cover a bimodal demand regardless of model capacity, and showed empirically that cross-encoder rerankers (BGE-Reranker) inherit the same failure, confirming the limitation is structural rather than incidental. As a drop-in post-retrieval module needing no labeled data or training, GeoRAG integrates with any existing RAG pipeline.

\end{document}